\newtheorem{defi}{Definition}
\newtheorem{theo}{Theorem}
\newtheorem{lem}{Lemma}
\begin{document}
\title{Finite Blocklength Analysis of Energy Harvesting Channels}
%
%
% author names and IEEE memberships
% note positions of commas and nonbreaking spaces ( ~ ) LaTeX will not break
% a structure at a ~ so this keeps an author's name from being broken across
% two lines.
% use \thanks{} to gain access to the first footnote area
% a separate \thanks must be used for each paragraph as LaTeX2e's \thanks
% was not built to handle multiple paragraphs
%

\author{K~Gautam~Shenoy,~\IEEEmembership{Member,~IEEE,}
        %John~Doe,~\IEEEmembership{Fellow,~OSA,}
        and~Vinod~Sharma,~\IEEEmembership{Senior Member,~IEEE}% <-this % stops a space
\thanks{Part of this paper has been published in ISIT 2016 \cite{Gau1}. K Gautam Shenoy and Vinod Sharma are with Electrical Communications Engineering Dept., Indian Institute of Science, Bangalore.}% <-this % stops a space
%\thanks{J. Doe and J. Doe are with Anonymous University.}% <-this % stops a space
%\thanks{Manuscript received April 19, 2005; revised August 26, 2015.}
}

\maketitle

% As a general rule, do not put math, special symbols or citations
% in the abstract or keywords.
\begin{abstract}
We consider Additive White Gaussian Noise channels and Discrete Memoryless channels when the transmitter harvests energy from the environment. These can model wireless sensor networks as well as Internet of Things. By providing a unifying framework that works for any energy harvesting channel, we study these channels assuming an infinite energy buffer and provide the corresponding achievability and converse bounds on the channel capacity in the finite blocklength regime. We additionally provide moderate deviation asymptotic bounds as well.
\end{abstract}

% Note that keywords are not normally used for peerreview papers.
\begin{IEEEkeywords}
Achievable rates, Converse, Channel Capacity, Finite Blocklength, EH-AWGN, EH-DMC.
\end{IEEEkeywords}

% For peer review papers, you can put extra information on the cover
% page as needed:
% \ifCLASSOPTIONpeerreview
% \begin{center} \bfseries EDICS Category: 3-BBND \end{center}
% \fi
%
% For peerreview papers, this IEEEtran command inserts a page break and
% creates the second title. It will be ignored for other modes.
\IEEEpeerreviewmaketitle

\section{Introduction}
In the information theoretic analysis of channels, channel capacity is the maximum rate at which a source can transmit messages to the receiver subject to an arbitrarily small probability of error. However, channel capacity can be achieved arbitrarily closely by using very large blocklength codes. In practice, we are restricted by blocklength and as a result, we would like to study the backoff from capacity as well as the variation in maximal code size as a function of blocklength. For a fixed probability of error, the study of achievable rates in the finite blocklength regime is also known as a second order analysis in literature.

Like channel capacity, a finite blocklength characterization consists of two parts, namely the achievability and the converse bound on the maximal code size (number of messages) $M$. Given the probability of error, the achievability part usually deals with the existence of a code using, for instance, random coding arguments or manipulating general achievability bounds and showing that the bound can be achieved. The converse, on the other hand is an upper bound on the maximal code size which is to be satisfied by every feasible code. This paper focuses on developing both for the energy harvesting channels.

Energy harvesting (EH) channels and networks have gained considerable interest recently due to advances in wireless sensor networks and green communications (see \cite{kamal}, \cite{ku2016} and \cite{raza}). Transmitting symbols requires energy at the encoder end. Thus the study of the channel is done in tandem with the energy harvesting system. The energy harvesting section is modeled as a buffer or a rechargeable battery which stores incoming energy from some ambient source (e.g., solar energy from the sun). The energy buffer may be of finite or infinite length and the energy arrival process may be discrete or continuous. A problem of interest is to compare the performance of a channel with and without the energy harvesting system (e.g., whether we can quantify the impact on the channel capacity, finite blocklength capacity, etc.).

Finite blocklength analysis for discrete memoryless channels (DMC) was first carried out by Strassen \cite{Strs}. Hayashi \cite{Hayas} and Polyanskiy et.al. \cite{PPV1} provided non-asymptotic second order results for Additive White Gaussian Noise (AWGN) channels in addition to other channel types. \cite{PPV1,polyis} further provided the third order terms and developed a meta-converse, a converse result that recovered and improved upon known converses. Later, tighter results for various DMC's were studied by Tomamichel et al. in \cite{Tan2}. Non-asymptotic analysis of channels with state was carried out in \cite{Tan3}. Under the energy harvesting setup, assuming infinite buffer, the channel capacity for EH-AWGN channels was obtained in \cite{VSnRaj} and \cite{Uluk1}. The study of finite blocklength achievability for energy harvesting noiseless binary channels was carried out in \cite{Jyang}. Non-asymptotic achievability for EH-AWGN channels and EH-DMC's was developed in \cite{Tan1} where the second order term was shown to be $O(\sqrt{n\log n})$. Both Achievability and converse results for EH-AWGN channels were further refined recently in \cite{TanF2} which considered block i.i.d. energy arrivals. Finite blocklength analysis for fading channels under CSIT and CSIR was carried out in \cite{deeku1}.

In addition to finite blocklength analysis, we also give bounds on the moderate deviations coefficient for EH-AWGN channels and EH-DMC. In this analysis, we transmit at a rate less than capacity where the backoff goes to zero at a certain rate called the moderate deviation regime. In this regime, the probability of error will go to zero with increasing blocklength $n$. The goal is to characterize the moderate deviation error exponent. Moderate deviation analysis has been studied for memoryless channels by Altug and Wagner \cite{altug} as well as Polyanskiy and Verdu \cite{polymd}. In \cite{polymd}, the authors characterize the moderate deviation coefficient in terms of the channel dispersion. For DMCs with variable length feedback, the moderate deviation analysis was carried out by Truong and Tan \cite{truo}.

\subsection{Main Contributions}
% Revisit this part
In this paper, we provide a scheme that can be directly used to compute achievable rates for a wide class of energy harvesting channels. We focus on analyzing EH-AWGN and EH-DMC with infinite buffer. We assume a fixed maximum probability of error while analyzing both.The scheme improves over previously known bounds and also provides the achievability of $\sqrt{n}$. It is shown that a save and transmit scheme where the saving phase is $O(\sqrt{n})$ long is sufficient to allow for reliable communication in an energy harvesting set up. When compared with the non-energy harvesting case (but with average power constraint), we observe that the second order term is still $\Theta(\sqrt{n})$. Note that the coefficients of the second order term would not necessarily be same. 

Next we provide a finite blocklength converse for EH-AWGN channels. This is derived by modifying Polyanskiy et. al. meta converse \cite{PPV1} and specifically applying it to EH-AWGN channels.We are able to show that in both, the achievability and converse, the second order term is $O(\sqrt{n})$. This also gives us the strong converse for this channel for free as the first order term is unaffected by the probability of error term. Next, we analyze DMCs with energy harvesting and provide the finite blocklength achievability and converse bounds for them. Then, we provide moderate deviation lower and upper bounds for both types of channels. This is done by showing that the bounds on channel dispersion, obtained while proving the finite blocklength bounds, also bound the moderate deviations coefficient. Finally, we plot our bounds for certain parameters and provide suitable inferences.

Recently, there have been improvements to the results of EH-AWGN channels notably in the converse bound \cite{TanF2}. Our proof is an alternate proof of the same under maximal probability of error criterion and the framework we consider is useful in obtaining a converse for EH-DMC.

%The paper is organized as follows. We provide the notation and some basic results in Section \ref{prem}. We explain the details of the energy harvesting system for the AWGN channel in section \ref{ehaw}. Section \ref{FBAchi} deals with the finite blocklength achievability for an EH-AWGN channel and Section \ref{FBConv} provides the finite blocklength converse for an EH-AWGN channel. In section \ref{FBDMC}, we derive the achievability and converse bounds for an energy harvesting DMC. Finally, we conclude the paper.
\section{Preliminaries}\label{prem}
\subsection{Basic notation}
We shall use boldface letters (e.g. $\mathbf{x}$) to denote vectors (belonging to $\mathbb{R}^n$ for a specified $n\in \mathbb{N}$). When the length of a vector needs to be specified, we shall mention it as $\mathbf{x}^{k}=(x_1,x_2,\cdots,x_k)$. Similarly, $\mathbf{x}_i^j = (x_i,x_{i+1},\cdots,x_j)$. Lower case letters denote deterministic scalars or vectors whereas upper case letters denote random variables or random vectors respectively. We shall use $[M]$ to denote the set $\{1,2,\cdots,M\}$. We shall denote by $\mathcal{P}(\mathcal{X})$, the set of probability distributions on $\mathcal{X}$ (in cases the alphabet is clear, we simply use $\mathcal{P}$). The expectation operator will be denoted by $\mathbb{E}$ and if the distribution (say $P$) needs to be specified, then it shall be denoted as $\mathbb{E}_P$. We will occasionally use the Bachmann-Landau notation $O(.)$, $\Theta(.)$ etc. to denote appropriate orders.

\subsection{Channels, probability of error and capacity}\label{bdef2}
Given an input alphabet $\mathcal{X}$ and output alphabet $\mathcal{Y}$, a \textit{channel}, denoted by $W(y|x)$ or equivalently $P_{Y|X}$, is a conditional probability measure on $\mathcal{Y}$ given $x \in \mathcal{X}$. If the probability density function exists for the channel, we shall denote it by $f_{Y|X}$.

Given a probability distribution $P$ on $\mathcal{X}$ and a channel $W$, we define the output measure $PW$ as
\begin{equation}
PW(y) = \sum_{x\in \mathcal{X}} P(x)W(y|x).\notag
%\label{marg}
\end{equation}
There are two notions of probability of error which we will use. Given a code $\mathcal{C}$ with $M$ messages, let $U \in [M]$ be the random variable, uniformly distributed on $[M]$, denoting the message to be transmitted and $\hat{U} \in [M]$ the message that is decoded at the receiver. The \textit{maximal probability of error} (max p.o.e.) of the code $\mathcal{C}$ is
\begin{IEEEeqnarray}{rCl}
P_{e,max}(\mathcal{C}):= \max \limits_{1\leq m\leq M} Pr\left[\hat{U}\neq m | U=m \right].%\notag
\label{maxpoe}
\end{IEEEeqnarray}
Similarly the \textit{average probability of error} (avg p.o.e.)is defined as
\begin{IEEEeqnarray}{rCl}
P_{e,avg}(\mathcal{C}):= \frac{1}{M}\sum_{m=1}^M Pr\left[\hat{U}\neq m | U=m \right].\notag
%\label{avgpoe}
\end{IEEEeqnarray}
The channel capacity is the same in both cases. However, in the finite blocklength regime, the differences are in higher order terms resulting in an $O(\log n)$ difference \cite{PPV1}. In this paper, we will stick to the maximal probability of error criterion since it is advantageous while analysing the energy harvesting DMC results.

An $(n,M,\varepsilon)$ code is a code with $M$ codewords of codeword length $n$ and probability of error at most $\varepsilon$. We define
\begin{IEEEeqnarray}{rCl}
M^*(n,\varepsilon) := \max \{M: \mbox{There exists a }(n,M,\varepsilon)\mbox{ code}\}.\notag
%\label{bcode1}
\end{IEEEeqnarray}
Given a $(n,M,\varepsilon)$ code, we shall call $\frac{\log M}{n}$ as the \textit{rate} of the code. For $0<\varepsilon<1$, the \textit{$\varepsilon-$capacity} $C_{\varepsilon}$ is defined as
\begin{IEEEeqnarray}{rCl}
C_{\varepsilon} = \lim_{n \to \infty} \frac{\log M^*(n,\varepsilon)}{n}\notag
%\label{ecap}
\end{IEEEeqnarray}
and the \textit{capacity} of the channel is defined as
\begin{IEEEeqnarray}{rCl}
C = \lim_{\varepsilon \to 0} C_{\varepsilon}.\notag
%\label{cap}
\end{IEEEeqnarray}
Note that both limits exist. It is clear that $C_{\varepsilon} \geq C$. However for certain classes of channels like DMCs and standard AWGN channels with average power constraints, we have $C_{\varepsilon} = C$ for every $0<\varepsilon <1$. Then we say that the channel satisfies the \textit{strong converse}, which means that if we transmit at rates greater than capacity, the probability of error of the code tends to 1 as the blocklength $n$ tends to infinity.

\subsection{AWGN Channel}
Given $\mathbf{a} \in \mathbb{R}^n$ and a covariance matrix $\mathbf{K} \in \mathbb{R}^{n\times n}$, denote
\begin{IEEEeqnarray}{rCl}
\mathcal{N}(\mathbf{a};\mathbf{K}) := \frac{\exp\left\{-(\mathbf{x}-\mathbf{a})^T\mathbf{K}^{-1}(\mathbf{x}-\mathbf{a})\right\}}{(2\pi)^{n/2}(\det(\mathbf{K}))^{1/2}}\notag
%\label{Norm1}
\end{IEEEeqnarray}
as the multivariate normal distribution with mean $\mathbf{a}$ and covariance matrix $\mathbf{K}$ whose determinant is non-zero. An additive white Gaussian noise (AWGN) channel with noise variance $\sigma^2$ is given by
\begin{IEEEeqnarray}{rCl}
Y = x+Z \nonumber
\end{IEEEeqnarray}
where $x \in \mathbb{R}$ is the input to the channel and $Z \sim \mathcal{N}(0;\sigma^2)$. The n-dimensional version is obtained by applying the one dimensional version ($n=1$) case independently on each input $x_i$, $1\le i\le n$. The AWGN channel with average power constraint $S$ is an AWGN channel where the input $\mathbf{x}$ satisfies 
\begin{IEEEeqnarray}{rCl}
\|\mathbf{x}\|_2^2 \leq nS 
\label{pwrcon}
\end{IEEEeqnarray}
where for $p \geq 1$, $\|\mathbf{x}\|_p = \left(\sum \limits_{i=1}^n x_i^p\right)^{1/p}$ is the $p$th norm of $\mathbf{x}$.

The capacity of an AWGN channel (denoted by $C_G$) with average power constraint $P$ is given by
\begin{IEEEeqnarray}{rCl}
C_G := \frac{1}{2}\log_2\left(1+\frac{P}{\sigma^2}\right) \mbox{ bits per channel use}.\notag
%\label{Cap1}
\end{IEEEeqnarray}
%Moreover it is known that the strong converse holds for standard AWGN channels and so the $\varepsilon-$capacity is equal to $C_G$. This is also evident from the finite blocklength characterization of capacity (see
In \cite{Strs} and \cite{Hayas}, it was shown that for an AWGN channel with average power constraint $P$, the maximum code size $M^*(n,\varepsilon,P)$, for $n$ sufficiently large, satisfies
\begin{equation}
\log M^*(n,\varepsilon,P) = nC_G +\sqrt{nV_G}\Phi^{-1}(\varepsilon) + O(\log(n))\notag
%\label{FBgau}
\end{equation}
where the probability of error is at most $\varepsilon$, $V_G = \frac{P(P+2)}{2(P+1)^2}$ and $\Phi(x) = \int \limits_{-\infty}^x \frac{e^{-u^2/2}}{\sqrt{2\pi}}\,du$. 

\subsection{Discrete Memoryless Channels (DMC)}
A DMC is characterized by a finite input alphabet $\mathcal{X}$, finite output alphabet $\mathcal{Y}$ and the transition probabilities given by $W=P_{Y|X}$, which satisfies for every $n \geq 1$
\begin{equation}
P_{\mathbf{Y}|\mathbf{X}}(\mathbf{y}|\mathbf{x}) = \prod_{i=1}^n P_{Y|X}(y_i|x_i).\notag
%\label{dmc1}
\end{equation}
While the output is, in principle, allowed to depend on past outputs (which is known as a DMC with feedback), we only consider DMC's without feedback. The capacity $C_D$ of a DMC $W$ is given by Shannon's formula as
\begin{equation}
C_D = \sup_{P \in \mathcal{P}(\mathcal{X})} I(P;W) \triangleq \sup_{P \in \mathcal{P}(\mathcal{X})} \sum_{x \in \mathcal{X}}\sum_{y \in \mathcal{Y}}P(x)W(y|x)\log\left(\frac{W(y|x)}{PW(y)}\right),\notag
%\label{dmc2}
\end{equation}
where $I(P;W)$ is the mutual information (see \cite{Cover}) between the input and output of the channel. 

Now we define a few terms that will be required later.
\begin{defi}
Given a channel $W$ and an output distribution $Q$, the information density \cite{Tshan} of the channel is given by
\begin{equation}
i(x,y;Q) =\log \left(\frac{W(y|x)}{Q(y)}\right).\notag
%\label{idens}
\end{equation}
Often $Q = PW$ for some $P \in \mathcal{P}(\mathcal{X})$, in which case we shall denote the information density by $i_P(x,y)$.
\end{defi}
Observe that $I(P;W) = \sum \limits_{x,y} P(x)W(y|x)i_P(x,y)$. Similarly, the variance of information density is given by
\begin{IEEEeqnarray}{rCl}
V(P;W) := \left[\sum \limits_{x \in \mathcal{X}}\sum \limits_{y \in \mathcal{Y}}P(x)W(y|x) (i_P(x,y))^2\right] -(I(P;W))^2.\notag
%\label{dmcv2}
\end{IEEEeqnarray}

The finite blocklength result for DMC's with channel $W$, probability of error $0<\varepsilon<1$ and $V(P;W) >0$ for a capacity achieving distribution $P$ is given by (see \cite{Strs}-\cite{Tan2})
\begin{equation}
\log M^*(n,\varepsilon) = nC_{D} +\sqrt{nV_D}\Phi^{-1}(\varepsilon) + O(\log(n)),\notag
%\label{dmcf}
\end{equation}
where 
\begin{IEEEeqnarray}{rCl}
V_D =\left\{ \begin{array}{c c}
V_{\min} := \min \limits_{P \in \Pi} V(P;W), & \varepsilon \leq 1/2, \\
V_{\max} := \max \limits_{P \in \Pi} V(P;W), & \varepsilon > 1/2, 
\end{array} \right..\notag
%\label{dmcv}
\end{IEEEeqnarray}
and $\Pi=\{P\in \mathcal{P}(\mathcal{X}): I(P;W)=C_D\}$ is the set of capacity achieving distributions. 

\subsection{DMC with cost constraints}\label{dmcost}
Let $\Lambda:\mathcal{X} \to \mathbb{R}$ be a non-negative function which we will refer to as the energy function. The energy function simply returns the energy of the symbol $x$ which is a generalization of the standard energy function $\Lambda(x) = x^2$ for an AWGN channel. We further assume that the energy function is separable, i.e., given a vector $\mathbf{x} \in \mathcal{X}^n$,
\begin{equation}
\Lambda(\mathbf{x}) := \sum_{i=1}^n \Lambda(x_i).
\label{energ1}
\end{equation}

Define the constrained sets $\mathbb{F}_a$ and $\mathcal{F}_a$ for $a\geq0$ as follows
\begin{IEEEeqnarray}{rCl}
\mathbb{F}_a &=& \{\mathbf{x}\in \mathcal{X}^n: \, \Lambda(\mathbf{x}) \leq na\}, \label{conset0} 
\\
\mathcal{F}_a &=& \{ P\in \mathcal{P}: \, \mathbb{E}_P[\Lambda(X)] \leq na\}.
\label{conset}
\end{IEEEeqnarray}

In a DMC with cost constraints (see \cite{Csiz,Kost}), where the codewords are drawn from $\mathbb{F}_a$, the capacity changes to 
\begin{equation}
C_D(a)=\sup\limits_{P \in \mathcal{F}_a}I(P;W). 
\label{cost1}
\end{equation}

Moreover, the maximum achievable code size, for any $a>0$, denoted by $M^*(n,\varepsilon,a)$, under some regularity conditions (see \cite{Hayas} for the result and \cite{Kost} for some refinements), is given by
\begin{equation}
\log M^*(n,\varepsilon,a) = nC_D(a) + \sqrt{nV_D(a)}\Phi^{-1}(\varepsilon) + O(\log n) \notag
%\label{Fcost}
\end{equation}
where $V_D(a)$ is the channel dispersion (see \cite{Hayas}).

An energy harvesting DMC (EH-DMC), may be viewed as a generalization of a DMC with cost constraints and its finite blocklength analysis is reserved to Section \ref{FBDMC}.

% Write about 3 pages
%First let us provide the notation. 
%\subsection{Discrete Memoryless Channels and EH-DMC}
%A discrete memoryless channel(DMC) is described by an input alphabet $\mathcal{X}$, output alphabet $\mathcal{Y}$, both of which are finite, and a probability mass function (p.m.f) $W:\mathcal{X}\to\mathcal{Y}$. The p.m.f will be compactly denoted as $W(y|x)$ for every $x\in \mathcal{X}$ and $y \in \mathcal{Y}$. Note that we are considering DMCs without feedback. Given an input vector $\mathbf{x}=\mathbf{x}=\{x_1,x_2,\cdots,x_n\}$, the output vector $\mathbf{y}=y^n=\{y_1,y_2,\cdots,y_n\}$ is distributed according to 
%\begin{equation}
%W^n(\mathbf{y}|\mathbf{x}) := \prod_{i=1}^n W(y_i|x_i)
%\label{prod1}
%\end{equation}
\subsection{Energy Harvesting AWGN channel} \label{ehaw}
An energy harvesting system consists of an energy buffer which stores energy from various sources over a period of time. Energy is usually harvested from some ambient source, e.g., solar power. An EH-AWGN channel consists of an energy harvesting system at the transmitter end, followed by an AWGN channel as shown in Fig. \ref{Fi1}.
	\begin{figure}[ht]
	\centering
	\includegraphics[scale=0.6]{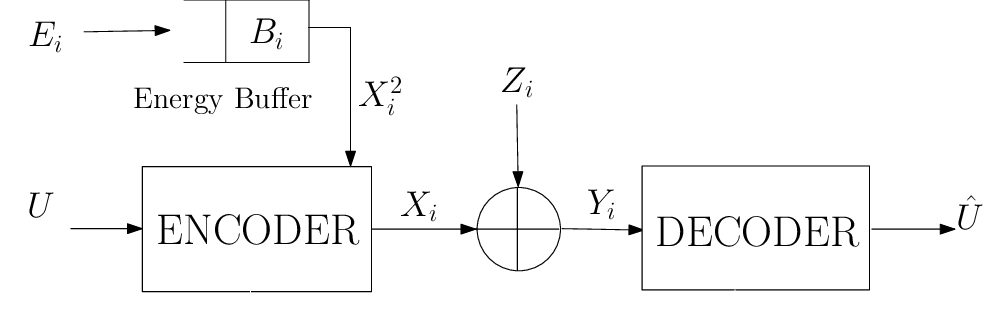}
	\caption{Block diagram of an AWGN energy harvesting system} \label{Fi1}
	\end{figure}
	To send symbol $x$ on the channel, we would require $x^2$ units of energy from the buffer and if sufficient energy is available, the transmission succeeds; otherwise an outage occurs. An outage event may be handled as an error event or a suitably truncated symbol may be transmitted. In this paper, as far as achievability is concerned, the outage will be treated as an error event. We assume that the energy buffer has infinite capacity and energy leakages do not occur. Additionally, the incoming energy process $\{E_i\}$ is assumed to be i.i.d., non negative with finite mean $\mathbb{E}[E_1]$ and finite variance $\sigma^2_E$.

The system works as follows. We first harvest energy $E_i$ in slot $i$, use it along with some energy in the buffer if needed to transmit the symbol and then store the remaining energy. Let $B_i$ be the energy in buffer at the $i$th transmission slot. Assume $B_0 = 0$. Then the energy in buffer, for $1\leq i \leq n$ evolves as
\begin{equation}
B_{i}=(B_{i-1}+E_i -X_i^2 )^+\notag
%\label{bufeq}
\end{equation}
where $(x)^+=\max\{x,0\}$.

For the ordinary AWGN channel with power constraint $S$, the sequences were supposed to satisfy (\ref{pwrcon}). The constraint for the energy harvesting AWGN channel, with input $\mathbf{x}$ and energy vector $\mathbf{e}$ is
\begin{equation}
\|\mathbf{x}^k\|_2^2 \leq \|\mathbf{e}^k\|_1 \quad 1\leq k \leq n.\notag
%\label{ehcons}
\end{equation}
which is another way of saying that there should, at every time instant, be enough energy to transmit the desired symbol. To ensure this, $\mathbf{x}$ is allowed to depend on $\mathbf{e}$. 

The capacity of an EH-AWGN channel (see \cite{VSnRaj} and \cite{Uluk1}) is 
\begin{equation}
C_{EG} = \frac{1}{2}\log\left(1+\frac{\mathbb{E}[E_1]}{\sigma^2}\right).
\label{capeq}
\end{equation}
Additionally, the strong converse was also shown to hold for this channel (see \cite{VSnRaj}). This would logically imply a converse of the form $\log M \leq nC_{EG} + o(n)$. However as we seek a refinement of this expression, we would need finer tools to extract a finite blocklength converse. In this regard, we will be using several results from \cite{PPV1}. For clarity, we use the notation of that paper.

We now state the following bounds on finite blocklength capacity for EH-AWGN channels.

\begin{theo}\label{TH1}
Consider a EH-AWGN channel, with AWGN variance $\sigma^2$, with energy harvesting process $\{E_i\}$, i.i.d. at the encoder, with mean $\mathbb{E}[E_1]$ and variance $\sigma_E^2 < \infty$. Given maximal probability of error $\varepsilon>0$,
\begin{enumerate}
	\item (Achievable bound)  For sufficiently large blocklength $n$, the maximum size of the code, $M^*(n,\varepsilon)$, satisfies
\begin{equation}
\log M^*(n, \varepsilon) \geq nC_{EG}+ \sqrt{n}\left[\sqrt{V_{EG}}\Phi^{-1}\left(\lambda\varepsilon\right)-K_{\varepsilon, \lambda}C_{EG}\right] - \frac{1}{2}\log n + O(1),
\label{ehawgnclb}
\end{equation}
where $C_{EG}$ is defined in \eqref{capeq}, $V_{EG} = \frac{\mathbb{E}[E_1]}{\mathbb{E}[E_1] + \sigma^2}\log^2_2(e)$, $K_{\varepsilon, \lambda} = \sqrt{\frac{4(2\mathbb{E}[E_1]^2 + \sigma_E^2 )}{(1-\lambda) \varepsilon \mathbb{E}[E_1]^2}}$ and the above holds for any $0<\lambda<1$.

\item (Converse Bound) Also,
\begin{equation}
\log M^*(n, \varepsilon) \leq nC_{EG} +\sqrt{nV_{EG2}}\Phi^{-1}(\varepsilon) + \frac{1}{2}\log n +O(1),
\label{ehawgncub}
\end{equation}
where $V_{EG2} = \frac{\mathbb{E}[E_1]^2 + \mathbb{E}[E_1^2] + 4\sigma^2\mathbb{E}[E_1]}{4(\mathbb{E}[E_1]+\sigma^2)^2}\log_2^2(e)$.
\end{enumerate}
\end{theo}
We defer the proof of the achievable bound to Section \ref{AWAch} and the converse bound to Section \ref{awgconv}. While the second order terms (coefficients of $\sqrt{n}$) do not match, we can conclude that $\log M^*(n,\varepsilon) = nC_{EG} + \Theta(\sqrt{n})$.
\subsection{Energy Harvesting DMC}\label{subdm}
An energy harvesting DMC is a DMC with an energy harvesting set up at the encoder. Let $\Lambda(.)$ be the energy function (see \eqref{energ1}) associated with this DMC. The model is the same as that of an EH-AWGN channel except for the following differences and assumptions.
\begin{enumerate}
	\item The AWGN channel is replaced with a DMC.
	\item Energy consumed by symbol $x_i$ is $\Lambda(x_i)$. Also there is a symbol $x_0$, with $\Lambda(x_0)=0$.
	\item We additionally assume that the DMCs are not exotic\footnotemark (see Appendix H of \cite{PPV1} and also see \cite{Tan2}). \footnotetext{A DMC is exotic if the maximum variance of it's information density i.e. $V_{\max}=0$ and for some input symbol $x_0$, $P(x_0) = 0$ for any capacity achieving distribution $P$ but $D(W(.|x_0)||Q_Y^*) = C$ and $V(W(.|x_0)||Q_Y^*) >0$.}
\end{enumerate}
The analysis for energy harvesting DMC's is, by and large, analogous to the analysis of EH-AWGN channels. However, using method of types (refer \cite{Cover,Csiz} for more information on types), we are able to improve the converse bound to resemble that of the original non-energy harvesting DMC. 

The capacity of an EH-DMC, where the energy harvesting process has mean $\mathbb{E}[E_1]$, is given by 
\begin{equation}
C_{ED} := \sup_{P\in \mathcal{F}_{\mathbb{E}[E_1]} } I(P;W)
\label{EDcap}
\end{equation} 
where $\mathcal{F}_a$ was defined in \eqref{conset}.

The following are the finite blocklength bounds on rate for EH-DMC, proved in this paper.

\begin{theo}
	Given $0<\varepsilon <1$, for maximal probability of error, consider an EH-DMC with HUS architecture, and the energy process $\{E_i\}$ i.i.d. with $E[E_1^2] < \infty$.
	\begin{enumerate}
		\item (Achievable bound) Given the input distribution $P_X \in \mathcal{F}_{\mathbb{E}[E_1]}$, the maximal size of the code $M^*(n,\varepsilon)$ with blocklength $n$ sufficiently large, satisfies
		\begin{equation}
		\log M^*(n,\varepsilon) \geq nI(P_X;W) - \sqrt{n}K_{\varepsilon,\lambda} I(P_X;W) + \sqrt{nV(P_X;W)}\Phi^{-1}\left(\lambda\varepsilon\right) -\frac{1}{2}\log n +O(1),
		\label{achdmc}
		\end{equation}
			for any $0 <\lambda < 1$. Here  $K_{\varepsilon,\lambda} = \frac{2\sqrt{Var(\Delta_1)}}{\mathbb{E}[E_1]\sqrt{(1-\lambda)\varepsilon}}$ and $\Delta_1 = E_1 -\Lambda(X_1)$. 
		
		\item (Converse Bound)
			Given $\eta > 0$, the maximal size of the code $M^*(n,\varepsilon)$ satisfies
			\begin{equation}
			\log M^*(n,\varepsilon) \leq nC_{ED} +\sqrt{n}C'(\mathbb{E}[E_1])D_{\varepsilon} +   \sqrt{nV^*_{\varepsilon}(\eta)}\left(\Phi^{-1}(\varepsilon) + \frac{K_{\varepsilon}\varepsilon}{4} \right) +O(\log n),
			\label{dmcfinq0}
			\end{equation}
			where $C'()$ is the derivative of the capacity cost function given in (\ref{cost1}) and $D_{\varepsilon}$, $K_{\varepsilon}$ and $V_{\varepsilon}^*(\eta)$ are functions of $\varepsilon$ independent of $n$.
			
	\end{enumerate}

\end{theo}
\subsection{Encoder and Decoder for Energy Harvesting Channels}
For traditional channels (AWGN, DMC etc.), the encoder and decoder have access to the codebook (random or otherwise) for the purposes of encoding and decoding respectively. In the energy harvesting setup, the encoder has access to the incoming energy values. Hence any codeword $c\in \mathcal{C}$, where $\mathcal{C}$ is the codebook, is an $n$ length vector  $c(m,\mathbf{e}^n)$, where $n$ is the block length, for message $m$ and energy vector $\mathbf{e}^n$. Due to causality requirements, the $i$th symbol of the codeword can only depend on $\mathbf{e}^i$. The decoder does not have access to these energy values and therefore does not have access to this energy dependent codebook. The decoder is, on the other hand, allowed to have access to an energy independent pre-codebook. Henceforth, in the context of energy harvesting channels, a codeword corresponding to message $m$ shall mean the mapping $m\to c(m,.)$. We note that the definitions of code size $M$, probability of error etc. as defined in section \ref{bdef2} remain unchanged. This is in the same spirit of the analysis of channels with state information available at encoder only.

We shall see in the achievability proofs that a codebook that is independent of energy values is created which is also available at the decoder. Then, at the encoder, the energy values are used to modify the codewords so as to meet the necessary constraints. This is one way of creating an encoder-decoder pair. This concept is similar to the one used in channels with state where the encoder has state information but not decoder \cite{elg}.

\section{Finite Blocklength Achievability bound for EH-AWGN}\label{AWAch}
This section deals with the proof of Theorem \ref{TH1}, part $1$). Let $0<\varepsilon<1$ be given. We shall construct a code for the EH-AWGN channel, which will have maximal probability of error not more than $\varepsilon$. Assume the buffer is empty at the beginning. This gives the worst case scenario, for if the buffer were nonempty at the start it could only help communications and therefore, our bound would still hold. The coding scheme we propose has two phases; namely a saving phase and a transmission phase. This is known in literature as the \textit{save and transmit scheme} (see \cite{Uluk1}). 
\subsection{Saving Phase}
In this phase, the transmitter transmits $0$, the symbol that uses zero energy, for a set number of slots. During this period, it allows the energy buffer to build up. The receiver is aware of the number of slots and chooses to ignore the output during those slots since they are not information bearing. The caveat is that slots are wasted, as far as information transfer is concerned, in gathering energy. To ensure that this scheme does not affect the coefficient of the first order term, it is required that the number of slots set for gathering energy scale at most as $o(n)$. 
%After this phase, the next phase is the transmission phase wherein the code book symbols are transmitted. Since we begin transmitting with a buffer that isn't empty, this will improve the chances of non occurence of energy outage. But the caveat is that slots are wasted in gathering energy.  

Fix $0<\lambda<1$ and consider $K_{\varepsilon,\lambda}$ as given in the statement of the theorem. Let $N_n$ represent the number of slots reserved for the saving phase. During this phase, the buffer fills up with energy and after $N_n$ time slots, we expect it to have crossed some threshold energy value which we will denote by $E_{0n}$. Let $N_n = K_{\varepsilon,\lambda}\sqrt{n}$, where and $E_{0n}=N_n\mathbb{E}[E_1]/2$. Let $\mathcal{E}_0$ denote the event that the system failed to gather $E_{0n}$ energy. We have
\begin{IEEEeqnarray}{rCl}
Pr(\mathcal{E}_0) &=& Pr\left[\sum_{i=1}^{N_n} E_i \leq E_{0n}\right] \notag\\
&=& Pr\left[\sum_{i=1}^{N_n} (E_i -\mathbb{E}[E_1]) \leq -E_{0n}\right] \notag\\
&\leq& Pr\left[\left| \sum_{i=1}^{N_n} (E_i -\mathbb{E}[E_1]) \right| \geq E_{0n}\right]\notag\\
&\leq& \frac{4\sigma_E^2}{K_{\varepsilon,\lambda}\mathbb{E}[E_1]^2\sqrt{n}}
\label{egout}
\end{IEEEeqnarray}
where in the last step, we used Chebyshev's inequality. The above bound ensures a decay of  $O(n^{-1/2})$ in the probability of error and hence can be made arbitrarily small for fixed $\varepsilon$.
\subsection{Transmit phase}
Let $n$ be the number of slots wherein we transmit symbols on the AWGN channel. We count channel uses from the $N_n +1$ instant onwards. Once we gather at least $E_{0n}$ energy, we must ensure, with high probability, that subsequent transmissions will not cause an outage. Let $\mathbf{v}^n$ be the input before checking the energy buffer. At transmission instant $i$, $1\leq i\leq n$, there are two cases, i.e.,
\begin{enumerate}
	\item There is sufficient energy in which case the input to the channel $x_i = v_i$.
	\item There is insufficient energy in which case we transmit $x_i=0$.
\end{enumerate}

 Let us denote the set of sequences $(\mathbf{v}^n, \mathbf{e}^n)$ that satisfy the above requirements by $\mathcal{A}_n$ where  
%satisfying the above requirements by $\mathcal{A}_n$ where
\begin{equation}
\mathcal{A}_n = \bigcap_{l=1}^n\{s_l \geq -E_{0n} \},
\label{cond1}
\end{equation}
and $s_l = \sum \limits_{k=1}^l e_k -v_k^2$. Note that the transmitted codeword satisfies the energy harvesting conditions, since $E_{0n}$ energy has already been harvested before the transmission started. Denote by $\mathcal{E}_1$ the event that the energy constraints are violated. Let $\{V_i\}$, $1\leq i\leq n$ be i.i.d. random variables (not necessarily Gaussian) with zero mean and variance $\mathbb{E}[E_1]$. Formally, 
\begin{IEEEeqnarray}{rCl}
Pr(\mathcal{E}_1) &=& Pr(\mathcal{A}_n^c) \notag \\
&=& Pr\left[\bigcup_{l=1}^n \left\{S_l \leq -E_{0n} \right\} \right]\notag\\
&\leq& Pr\left[\bigcup_{l=1}^n \left\{|S_l| \geq E_{0n} \right\} \right]\notag\\
&=& Pr\left[\max_{1\leq l\leq n} |S_l| \geq E_{0n}\right]
\label{ehviol}
\end{IEEEeqnarray}
and $S_l = \sum \limits_{k=1}^l E_k -V_k^2$. Now $S_l$ is a sum of i.i.d. random variables with zero mean and finite variance. We now invoke Kolmogorov's inequality (\cite{Athr}, Chapter 3) which is stated as follows.

\begin{lem}[Kolmogorov's Inequality]
Let $Z_i$ be independent zero mean random variables and $S_n = \sum_{i=1}^nZ_i$. If $\mathbb{E}[Z_i] = 0$ and $\mathbb{E}[Z_i^2] < \infty$ then for any $0 < a < \infty$
\begin{IEEEeqnarray}{C}
P\left(\max_{1\leq i\leq n}|S_i| \geq a\right) \leq \frac{\mathbb{E}[S_n^2]}{a^2}.\notag
%\label{Kolm}
\end{IEEEeqnarray}
\end{lem}

Hence we have,
\begin{IEEEeqnarray}{rCl}
Pr(\mathcal{E}_1) &\leq& \frac{\mathbb{E}[S_n^2]}{E_{0n}^2}\notag\\
&=& \frac{4(2\mathbb{E}[E_1]^2 + \sigma_E^2 )}{K_{\varepsilon,\lambda}^2 \mathbb{E}[E_1]^2}.
\label{ehviol2}
\end{IEEEeqnarray}

Unlike (\ref{egout}), the RHS above is independent of $n$. However, by a clever choice of $K_{\varepsilon,\lambda}$, it can be made small enough. Our choice of $K_{\varepsilon,\lambda}$ will ensure that $Pr(\mathcal{E}_1) \leq (1-\lambda )\varepsilon$. Thus a total of $N_n + n$ slots are used for both saving and transmission in this scheme.

We'd like to remark that the aforementioned results do not assume that $V_i$ is Gaussian and the channel part has no role here except for the input constraint. This means that the above bound holds for non-Gaussian energy harvesting channels with independent inputs having variance $\mathbb{E}[E_1]$.

\subsection{Lower bound derivation}
Let $\mathcal{E}_H = \mathcal{E}_0 \cup \mathcal{E}_1$. Now in maximal probability of error (see (\ref{maxpoe})), we see that
\begin{IEEEeqnarray}{rCl}
P_{e,max} &=& \max_{1\leq i\leq M} Pr[\hat{U}\ne i|U=i] \notag\\
&=&  \max_{1\leq i\leq M} Pr[\hat{U}\ne i,\mathcal{E}_H^c|U=i] + Pr[\hat{U}\ne i,\mathcal{E}_H|U=i]  \notag\\
&\leq& \max_{1\leq i\leq M} Pr[\hat{U}\ne i|U=i,\mathcal{E}_H^c] + Pr[\mathcal{E}_H]  
\label{maxpsim}
\end{IEEEeqnarray}

At this point, we invoke Feinstein's lemma (see \cite{PPV1}) which is stated as follows.

\begin{lem}[Feinstein's Lemma]\label{flem}
Let $\varepsilon>0$, $n\geq 1$ and $P_{\mathbf{X}^n}$ be given. Then there exists a $(n,M,\varepsilon)$, maximal p.o.e. code such that for any $\gamma_n >0$

\begin{equation}
\varepsilon \leq Pr\left[\log \left(\frac{W^n(\mathbf{Y}^n|\mathbf{X}^n)}{P_{\mathbf{Y}^n}(\mathbf{Y}^n)}\right)\leq \log \gamma_n\right] +\frac{M}{\gamma_n}, 
\label{fleq}
\end{equation}
where $P_{\mathbf{Y}^n} = \sum\limits_{\mathbf{x}^n} W^n(.|\mathbf{x}^n)P_{\mathbf{X}^n}(\mathbf{x}^n) $.
 
\end{lem}
 
In Feinstein's Lemma above, we pick $P_X$ as Gaussian with zero mean and variance $E[E_1]$. Observe that this choice of input distribution allows bound (\ref{ehviol2}) to be valid. Moreover under $\mathcal{E}_H^c$, i.e. absence of outage, it is as if we are transmitting on a Gaussian channel with noise variance $\sigma^2$ and average power constraint $E[E_1]$. Thus, the first term on RHS of (\ref{maxpsim}) is upper bounded by RHS of Feinstein's Lemma. We have already derived an upper bound on  $Pr(\mathcal{E}_H ) $ via (\ref{egout}), (\ref{ehviol2}) and the union bound. Next we shall derive a suitable upper bound on $Pr\left[\log \left(\frac{W^n(\mathbf{Y}^n|\mathbf{X}^n)}{P_{\mathbf{Y}^n}(\mathbf{Y}^n)}\right)\leq \log \gamma_n\right]$.

Let $G_i = \log \left(\frac{\mathbf{W}(Y_i|X_i)}{\mathbf{P}_Y(Y_i)} \right)$. Then we have
\begin{equation}
Pr\left[\log \left(\frac{W^n(\mathbf{Y}^n|\mathbf{X}^n)}{P_{\mathbf{Y}^n}(\mathbf{Y}^n)}\right)\le \log \gamma_n\right] = Pr\left\{ \sum_{i=1}^n G_i \leq \log \gamma_n \right\}. 
\label{erro4}
\end{equation}

Note that $G_i$ are i.i.d based on the remarks provided earlier. Moreover, we have
\begin{IEEEeqnarray}{rCl}
C_{EG} &:=E[G_i] &= \frac{1}{2}\log\left(1+ \frac{\mathbb{E}[E_1]}{\sigma^2}\right), \\
V_{EG} &:=Var(G_i) &= \frac{\mathbb{E}[E_1]}{\mathbb{E}[E_1] + \sigma^2}\log^2_2(e) \label{VEG1}.
\end{IEEEeqnarray}
Also the third moment, $E[|G_i|^3]$, is finite. To proceed further, we state the Berry Esseen's theorem (see Theorem 6.4.1 in \cite{Athr}). 
\begin{lem}[Berry Esseen's Theorem]
Let $X_i,~1\leq i\leq n$, be an i.i.d. sequence of random variables with mean $\mu$, variance $\sigma^2 <\infty$ and $E[|X_1|^3] < \infty$. Let $S_n = \sum\limits_{i=1}^n X_i$. Then we have, for any $x \in \mathbb{R}$,
\begin{equation*}
\left|Pr\left( \frac{S_n - n\mu}{\sigma\sqrt{n}} \leq x\right) - \Phi(x)\right| \leq C\frac{E|X_1 - \mu|^3}{\sigma^3\sqrt{n}},
\label{berry}
\end{equation*}
where $C<1/2$ (see \cite{Tyu}).  Note that the bound is uniform in $x$. 
\end{lem}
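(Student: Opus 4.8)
The plan is to prove this via the classical Fourier-analytic route, reducing the statement to an estimate on characteristic functions through Esseen's smoothing inequality. First I would normalize: set $Y_i = (X_i - \mu)/\sigma$, so that the $Y_i$ are i.i.d.\ with mean $0$, variance $1$, and $\rho := E|Y_1|^3 = E|X_1-\mu|^3/\sigma^3$. Writing $T_n = (S_n - n\mu)/(\sigma\sqrt n) = n^{-1/2}\sum_{i=1}^n Y_i$, the claim becomes $\sup_x |F_n(x) - \Phi(x)| \le C\rho/\sqrt n$, where $F_n$ is the distribution function of $T_n$. Let $\phi(t) = E[e^{itY_1}]$ be the characteristic function of a single summand, so that the characteristic function of $T_n$ is $f_n(t) = \phi(t/\sqrt n)^n$, and recall that the characteristic function associated with $\Phi$ is $e^{-t^2/2}$.

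The engine of the proof is Esseen's smoothing lemma: for any distribution function $F$ and any differentiable $G$ with $G(-\infty)=0$, $G(+\infty)=1$ and $\sup_x|G'(x)| \le m$, one has, for every $T>0$,
\begin{equation*}
\sup_x |F(x) - G(x)| \le \frac{1}{\pi}\int_{-T}^{T}\left|\frac{\hat F(t) - \hat G(t)}{t}\right|\,dt + \frac{c_0\, m}{T},
\end{equation*}
where $\hat F, \hat G$ denote the corresponding characteristic functions and $c_0$ is an absolute constant. I would apply this with $F = F_n$ and $G = \Phi$, for which $m = \sup_x |\Phi'(x)| = 1/\sqrt{2\pi}$. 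Everything then reduces to controlling the difference $|f_n(t) - e^{-t^2/2}|$ on the interval $|t|\le T$ and choosing $T$ optimally.

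The core estimate handles $\Delta(t) := f_n(t) - e^{-t^2/2}$. Using the third-moment bound one obtains the single-summand Taylor estimate $|\phi(s) - (1 - s^2/2)| \le \rho|s|^3/6$, together with $|\phi(s)| \le \exp(-s^2/2 + \rho|s|^3/6)$, so that on the range $|t| \le T := a\sqrt n/\rho$ (with $a$ a small absolute constant) the base $\phi(t/\sqrt n)$ has modulus bounded by $e^{-t^2/(3n)}$. Writing $f_n(t) = \phi(t/\sqrt n)^n$ and comparing it to $(e^{-t^2/(2n)})^n = e^{-t^2/2}$ through the telescoping product inequality $|z^n - w^n| \le n\,|z-w|\,\max(|z|,|w|)^{n-1}$, I would derive a pointwise bound of the form $|\Delta(t)| \le K\,\rho\, n^{-1/2}\,|t|^3\, e^{-t^2/4}$ valid on $|t|\le T$. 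Integrating $|\Delta(t)|/|t|$ against $dt$ then yields $\frac{1}{\pi}\int_{-T}^T |\Delta(t)|/|t|\,dt \le K'\rho/\sqrt n$, since $\int |t|^2 e^{-t^2/4}\,dt < \infty$. With $T = a\sqrt n/\rho$ the second term of the smoothing inequality is $c_0 m/T = O(\rho/\sqrt n)$ as well, and summing the two contributions gives $\sup_x|F_n - \Phi| \le C\rho/\sqrt n$ with an absolute constant $C$.

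The main obstacle is not the overall $\rho/\sqrt n$ rate, which the argument above delivers with a readily computable absolute constant (crude bookkeeping gives $C$ of order a few units), but rather the sharp value $C < 1/2$ asserted in the statement. Pushing the constant into this range requires the refinements of \cite{Tyu}: an optimal choice of the smoothing kernel in Esseen's inequality, sharper characteristic-function estimates near $t=0$, and a more careful treatment of the tail region where $|t|$ approaches $T$ than the crude product bound used above. For the purposes of this paper I would invoke \cite{Tyu} directly for the numerical constant, while the Fourier/smoothing scheme sketched here supplies the full qualitative statement, uniform in $x$.
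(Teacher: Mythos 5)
The paper offers no proof of this lemma at all: it is imported as a known classical result, cited to Theorem 6.4.1 of \cite{Athr} for the statement and to \cite{Tyu} for the constant $C<1/2$. Your sketch therefore supplies a proof where the paper supplies only a citation, and as a reconstruction of the standard Fourier-analytic argument it is sound in outline: the single-summand Taylor bounds $|\phi(s)-(1-s^2/2)|\le \rho|s|^3/6$ and $|\phi(s)|\le \exp(-s^2/2+\rho|s|^3/6)$, the modulus bound $e^{-t^2/(3n)}$ on $|t|\le a\sqrt n/\rho$, the telescoping comparison of $\phi(t/\sqrt n)^n$ with $e^{-t^2/2}$, and the choice $T=a\sqrt n/\rho$ combine exactly as you describe to give $\sup_x|F_n(x)-\Phi(x)|\le C\rho/\sqrt n$. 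Two bookkeeping points should be made explicit if this were written out: (i) you need Lyapunov's inequality $\rho=E|Y_1|^3\ge (E Y_1^2)^{3/2}=1$, both to absorb the $s^4$ term arising from $|e^{-s^2/2}-(1-s^2/2)|\le s^4/8$ into a multiple of $\rho|s|^3$ on $|s|\le a/\rho$, and to justify the exponential upper bound on $|\phi(s)|$ in that range; (ii) the estimate $\max\bigl(|\phi(t/\sqrt n)|,e^{-t^2/(2n)}\bigr)^{n-1}\le e^{-t^2/4}$ requires $n$ above a small threshold, and the remaining small-$n$ cases (more generally, $\rho/\sqrt n$ bounded below) must be disposed of by the trivial bound $|F_n-\Phi|\le 1$ at the cost of a constant. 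You are also right about the one genuine limitation: the smoothing-inequality route with crude constants cannot reach $C<1/2$, and deferring the numerical constant to \cite{Tyu} is precisely what the paper itself does, so your proposal and the paper are in agreement on the only part that is not elementary.
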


 Let $K = \frac{E[|G_i -E[G_i]|^3]}{2V_{EG}^{3/2}}$. Applying Berry Esseen's theorem, we have for any $u \in \mathbb{R}$,
\begin{equation*}
\left|Pr\left\{ \frac{\left(\sum \limits_{i=1}^n G_i\right) -nC_{EG}}{\sqrt{nV_{EG}}} \leq u \right\} - \Phi(u)\right| \leq \frac{K}{\sqrt{n}}\label{bery1}.
\end{equation*}
Substituting $u =  \frac{\log \gamma_n -nC_{EG}}{\sqrt{nV_{EG}}}$, we get

\begin{equation}
Pr\left\{ \sum_{i=1}^n G_i \leq \log \gamma_n \right\} \leq \Phi \left(\frac{ \log \gamma_n - nC_{EG} }{\sqrt{nV_{EG}}}\right) + \frac{K}{\sqrt{n}} \label{bery2}.
\end{equation}

Let $\varepsilon_n = \lambda\varepsilon - \frac{8\sigma_E^2}{K_{\varepsilon,\lambda}\mathbb{E}[E_1]^2\sqrt{n}} - \frac{2K}{\sqrt{n}}$ and $\log \gamma_n = nC_{EG} + \sqrt{nV_{EG}}\Phi^{-1}(\varepsilon_n)$. We pick $n$ large enough to ensure $\varepsilon_n >0$. From (\ref{maxpsim}), (\ref{fleq}), (\ref{erro4}) and (\ref{bery2}) we have
\begin{IEEEeqnarray}{rCl}
\log M &\geq& \log \gamma_n -\frac{1}{2}\log n +O(1) \notag\\
&\geq& nC_{EG} + \sqrt{nV_{EG}}\Phi^{-1}(\varepsilon_n) -\frac{1}{2}\log n +O(1)\notag
%\label{Mboun}
\end{IEEEeqnarray}

We further simplify $\Phi^{-1}\left(\varepsilon_n\right)$ using Taylor's theorem. There exists $u \in (\varepsilon_n , \lambda\varepsilon)$ such that
\begin{equation*}
f\left(\varepsilon_n\right) = f\left(\lambda\varepsilon\right) +(\varepsilon_n-\lambda\varepsilon)f'(u),
\label{tayl2}
\end{equation*}
where $f(x) = \Phi^{-1}(x)$. Note that $f(x)$ has a derivative that is positive, strictly decreasing upto $x=1/2$; beyond which it increases. Thus in $(\varepsilon_n, \lambda\varepsilon)$, $f'(u) \leq \hat{f} = \max\{f'(\varepsilon_{n_0}), f'(\lambda\varepsilon)\}$ where $n_0$ is the smallest $n$ for which $\varepsilon_n > 0$. Hence we get, with our choice of $\varepsilon_n$, that
\begin{equation*}
\log M \geq nC_{EG} + \sqrt{nV_{EG}}\Phi^{-1}\left(\lambda\varepsilon\right) -\frac{1}{2}\log(n) + O(1).
\label{bou1}
\end{equation*}

Let $\hat{n} = n +N_n$. We have used $\hat{n}$ slots out of which $n$ were for data transmission. We will express the result as a function of $\hat{n}$; the total number of slots used. Hence we have,
\begin{IEEEeqnarray}{rCl}
\log M^*(\hat{n}, \varepsilon) &\geq& (\hat{n}-N_n))C_{EG} + \sqrt{nV_{EG}}\Phi^{-1}\left(\left(\lambda\varepsilon\right)\right) -\frac{1}{2}\log((\hat{n}-N_n)) + O(1), \notag \\
&\geq& \hat{n}C_{EG} -K_{\varepsilon,\lambda}\sqrt{\hat{n}}C_{EG}+ \sqrt{nV_{EG}}\Phi^{-1}\left(\lambda\varepsilon\right) - \frac{1}{2}\log\hat{n} + O(1). \label{lbawg}
\end{IEEEeqnarray}
Note that $\sqrt{n} \leq \sqrt{\hat{n}}$ and $\sqrt{n} \geq \sqrt{\hat{n}}-\frac{K_{\varepsilon,\lambda}}{2}$, the latter follows from
\begin{equation}
\sqrt{\hat{n}} = \sqrt{n+K_{\varepsilon,\lambda} \sqrt{n}}= \sqrt{n}\sqrt{1+\frac{K_{\varepsilon,\lambda}}{\sqrt{n}}}\leq \sqrt{n}\left(1+\frac{K_{\varepsilon,\lambda}}{2\sqrt{n}}\right)= \sqrt{n} + \frac{K_{\varepsilon,\lambda}}{2},
\label{sqrn}
\end{equation}
where we have used $(1+x)^{\frac{1}{2}} \leq 1 + \frac{x}{2}$ for $x>0$. From (\ref{lbawg}) and (\ref{sqrn}), we observe that regardless of the sign of $\Phi^{-1}(\lambda\varepsilon)$, the lower bounds obtained differ by a constant which does not depend on $n$. Putting it all together, we get for $\hat{n}$ large enough
\begin{equation}
\log M^*(\hat{n}, \varepsilon) \geq \hat{n}C_{EG}+ \sqrt{\hat{n}}\left[\sqrt{V_{EG}}\Phi^{-1}\left(\lambda\varepsilon\right)-K_{\varepsilon,\lambda}C_{EG}\right] - \frac{1}{2}\log\hat{n} + O(1).\notag
%\label{ehawgnl1}
\end{equation}
This concludes the proof of the achievable bound for Theorem \ref{TH1}.

\section{Finite blocklength Achievability Bound for EH-DMC}\label{FBDMC}
We use the same random coding strategy as in the EH-AWGN channel case. Choose any input distribution $P_X \in \mathcal{F}_{\mathbb{E}[E_1]}$ . Generate an $M\times n$ matrix with each element distributed i.i.d. with distribution $P_X$. Now follow the proof exactly as in the achievability of the EH-AWGN channel case, replacing the term $X_i^2$ with $\Lambda(X_i)$ wherever it is encountered. 

In particular, we could substitute $P_X^* \in \Gamma$ (where $\Gamma$ is the set of capacity achieving input distributions that are contained in $\mathcal{F}_{\mathbb{E}[E_1]}$)  to obtain the best bound. If there are many capacity achieving distributions, then $V(P_X^*;W)$ may change with the choice of distribution $P_X^*$. Hence consider 
\begin{IEEEeqnarray}{rCl}
V_{ED} =\left\{ \begin{array}{c c}
V_{\min} := \min \limits_{P \in \Gamma} V(P;W), & \mbox{if } \varepsilon \leq \frac{1}{2\lambda},\notag\\
V_{\max} := \max \limits_{P \in \Gamma} V(P;W), & \mbox{if } \varepsilon > \frac{1}{2\lambda}.\notag
\end{array} \right.
%\label{edmcv}
\end{IEEEeqnarray}
Putting it all together, we obtain the following achievability bound, 
\begin{equation}
\log M^*(\hat{n},\varepsilon) \geq \hat{n}C_{ED} - \sqrt{\hat{n}}K_{\varepsilon,\lambda} C_{ED} + \sqrt{\hat{n}V_{ED}}\Phi^{-1}\left(\lambda\varepsilon\right) -\log \hat{n} +O(1).\notag
%\label{achdmc2}
\end{equation}
for all $\hat{n}$ sufficiently large,

\section{Converse Theorems}
In this section, we will provide a general upper bound on finite blocklength rates for energy harvesting channels. We resort to methods used in \cite{PPV1} to derive these new bounds. Then we apply these to the EH-AWGN and the EH-DMC.

We recall the following error probability function $\beta_{\alpha}(P,Q)$ (see \cite{PPV1}).
\begin{defi} 
Given two distributions $P$ and $Q$ on $\mathcal{X}$, define for $\alpha \in [0,1]$,
\begin{equation}
\beta_{\alpha}(P,Q) := \min Q[T=1] := \min \int_\mathcal{X}P_{T|X}(1|x)dQ(x)
\label{betax}
\end{equation}
where the minimum is over all distributions ($P_{T|X}$) of test functions $T:\mathcal{X} \to\{0,1\}$ such that $P[T=1]\geq \alpha$.
\end{defi}
This function is essentially the type 2 error probability (probability of deciding $P$ when $Q$ is true) when the type 1 error probability is less than $1-\alpha$. 

The Meta Converse, proved in \cite{PPV1}, is one of the tightest known general converse bounds for any channel. There are two versions, one for average probability of error and the other for maximal probability of error. Note that these are single shot bounds and can be naturally extended for blocklength $n$.
\begin{lem}[Meta Converse (avg p.o.e)]\label{metco2}
Every $(M,\varepsilon)$ average probability of error code satisfies 
\begin{equation}
 M \leq \sup_{P_X} \frac{1}{\beta_{1-\varepsilon}(P_{XY},P_X Q_Y)}.\notag
%\label{meta22}
\end{equation}
for any output distribution $Q_Y$.
\end{lem}

\begin{lem}[Meta Converse (max p.o.e)]\label{metco}
Every $(M,\varepsilon)$ maximal probability of error code satisfies 
\begin{equation}
 M \leq \frac{1}{\beta_{1-\varepsilon}(P_{Y|X=c(\overline{m})},Q_Y)}\leq \sup_{x \in \mathbb{F}}\frac{1}{\beta_{1-\varepsilon}(P_{Y|X=x},Q_Y)}.\notag
%\label{meta2}
\end{equation}
for any output distribution $Q_Y$ and codewords coming from $\mathbb{F} \subset \mathcal{X}$, where $\mathcal{X}$ is the input alphabet and $c(\overline{m})$ is the codeword of the message $\overline{m}$ satisfying
\begin{equation}
\overline{m} = \arg \min \limits_{m\in [M]} Pr[\hat{U} = m|U=m] .\label{minach}
\end{equation}
under channel $Q_Y$.
\end{lem}

However it is not immediately clear as to the technique of incorporating the effects of energy harvesting in the above expression. This is due to the fact that the set $\mathbb{F}$ above, which is the constrained set, changes with energy. Also unlike traditional channels, the codebook will change depending on available energy. Hence any codeword is of the form $c(m,\mathbf{e})$ for message $m$ and energy vector $\mathbf{e}$. 

\subsection{Energy Harvesting Converse (General Version)}
Under the energy harvesting setup described earlier, we obtain the following converse bounds. 
\begin{theo}
Given an energy harvesting setup with channel $W$, incoming energy process $E\sim P_E$ i.i.d., every $(M,\varepsilon)$ code (average p.o.e) satisfies
\begin{equation}
M \leq \sup_{P_{X^n|E^n}} \frac{1}{\beta_{1-\varepsilon}(P_{E^nX^nY^n},P_{E^nX^n}Q_{Y^n})}
\label{metaeh1}
\end{equation}
 where $P_{E^nX^nY^n}(e^n,x^n,y^n) = P_{E^n}(e^n)P_{X^n|E^n}(x^n|e^n)W(y^n|x^n)$ and for any output distribution $Q_{Y^n}$. The supremum is taken over all distributions that satisfy the energy harvesting constraints. Under the maximal probability of error case, we have
\begin{equation}
M \leq \frac{1}{\beta_{1-\varepsilon} \left(W(.|c(\overline{m},*))P_{E^n}(*),Q_{Y^n}P_{E^n}\right)} 
%\leq \frac{1}{\beta_{1-\varepsilon}\left(\mathbb{E}\left[\sup\limits_{x \in \mathbb{F}_E}W(.|x)\right],Q_Y\right) }
\label{metaeh2}
\end{equation}
for any output distribution $Q_{Y^n}$ and $c(\overline{m},*)$ is the codeword whose message $\overline{m}$ satisfies (\ref{minach}). Here $.$ represents the output alphabet and $*$ represents the energy alphabet.
%Here $\mathbb{F}_E \subseteq \mathcal{X}$ depicts the energy harvesting constraints and hence depends on the energy process $E$. Note that the RHS is valid only if $\mathbb{E}[\sup\limits_{x \in \mathbb{F}_E}W(.|x)]$ is a measure on output alphabet $\mathcal{Y}$ which is true for EH-AWGN and EH-DMC.
\end{theo}
\begin{proof}
The proof of (\ref{metaeh1}) is available in \cite{TanF2}. For the proof of (\ref{metaeh2}), refer to Appendix \ref{stpr}.
\end{proof}
The bound in (\ref{metaeh1}) was used to develop a finite blocklength converse for EH-AWGN channels, extended to the block i.i.d. energy arrivals regime \cite{TanF2}. We shall derive the same result for EH-AWGN channels under maximal probability of error criterion but using (\ref{metaeh2}). 

There is a weaker, but analytically convenient, converse bound under maximal p.o.e. stated as follows.
\begin{theo}\label{Thuse}
Consider an energy harvesting setup with channel $W$, incoming energy process $E\sim P_E$ i.i.d. and cost function $\Lambda$ as defined in Section \ref{subdm}. Under the requirement that every codeword $\mathbf{x}(m,\mathbf{e}^n)$ satisfying the energy harvesting constraint, i.e., 
\begin{equation}
\sum_{i=1}^n \Lambda(x_i(m,\mathbf{e}^n))  \leq \sum_{i=1}^n e_i,
\label{ehconst3}
\end{equation}
for energy vector $\mathbf{e}^n$ and maximal probability of error $\varepsilon$,
\begin{equation}
M \leq \sup_{\mathbf{x}^n \in \mathbb{F}_{\overline{E}_n}}\frac{1}{\beta_{1-\varepsilon-\tau_n} \left(W(.|\mathbf{x}^n),Q_{Y^n}\right)},
\label{metaeh3}
\end{equation}
where $\tau_n = Pr(\sum_{i=1}^n E_i \geq n\overline{E}_n)$,
\begin{equation}
\mathbb{F}_{\overline{E}_n} = \left\{\mathbf{x}^n: \sum_{i=1}^n \Lambda(x_i)  \leq n\overline{E}_n\right\},
\label{alset}
\end{equation}
and $\overline{E}_n$ is a non-negative sequence chosen such that $\tau_n<1-\varepsilon $.
\end{theo}
\begin{proof}
Refer Appendix \ref{Thusea}.
\end{proof}
There is a nice structure for EH-AWGN channels that helps in getting sharper bounds when using (\ref{metaeh1}) or (\ref{metaeh2}). These details are clarified in the proof of the converse bound for EH-AWGN channel. However that structure is absent when dealing with EH-DMCs. Theorem \ref{Thuse}, will be used to get a useful upper bound in this case.

\section{Finite blocklength converse bound for EH-AWGN}\label{awgconv}
We argue that it suffices to look at codewords $\mathbf{x}^n$ that satisfy 
\begin{equation}
\sum_{k=1}^n x_k^2 = \sum_{k=1}^n e_k,
\label{consteq}
\end{equation}
where $\mathbf{e}^n$ is the energy vector. In short, we are ignoring the outage events that can happen for $1\leq k<n$ and we are using up all the energy in transmission at time $n$. The former is justified by noting that doing so merely relaxes the constraints and that can only increase capacity. Hence any upper bound on the relaxed version is an upper bound on the original version. As for the latter, it is a well known Yaglom-map trick where given the best code of codeword length $n$ but satisfying (\ref{consteq}) with a strict inequality ($<$), we can construct a new code with the same probability of error but with codeword length $n+1$. The extra symbol is picked so as to exhaust all remaining energy. This new code clearly satisfies (\ref{consteq}), is an upper bound for the original $n$ length code and is further upper bounded by the largest code of codeword length $n+1$ satisfying (\ref{consteq}).

Let $0<\varepsilon<1$, the maximal probability of error be fixed. Picking $W$ as a Gaussian channel with variance $\sigma^2$ and $Q_{Y^n} = \prod_{i=1}^n Q_Y$, where $Q_Y$ is Gaussian with mean $0$ and variance $\mathbb{E}[E_1]+\sigma^2$. Now for two distributions $P_1$ and $P_2$, and any $\gamma >0$, $\beta_\alpha(P_1,P_2)$ is lower bounded as (from \cite{PPV1}, ($106$))
\begin{equation}
\beta_\alpha(P_1,P_2) \geq  \frac{1}{\gamma} \left(\alpha - P_1\left[\frac{dP_1}{dP_2} \geq \gamma\right]\right).
\label{betlb}
\end{equation}
From (\ref{metaeh2}) and (\ref{betlb}), we have for any $\gamma_n > 0$,
\begin{equation}
M \leq \frac{\gamma_n}{1-\varepsilon - Pr\left[\log \frac{W(\mathbf{Y}^n|\mathbf{x}^n(\overline{m},\mathbf{E}))}{Q_{Y^n}} \geq \log \gamma_n\right]}
\label{Mb11}
\end{equation}
where the probability is under $W(.|\mathbf{x}(\overline{m},*))P_{E^n}(*)$. Since $W$ here is a Gaussian channel, we can replace $Y_i$ with $x_i(\overline{m},\mathbf{e}) + Z_i$ where $Z_i$ are i.i.d. $\mathcal{N}(0,\sigma^2)$. The probability term in the denominator then simplifies to
\begin{IEEEeqnarray}{rCl}
&&Pr\left[\log \frac{W(\mathbf{Y}^n|\mathbf{x}^n(\overline{m},\mathbf{E}))}{Q_{Y^n}} \geq \log \gamma_n\right] \notag\\
&=& Pr\left[\sum_{i=1}^n \frac{(x_i(\overline{m},\mathbf{E})+Z_i)^2}{2(\mathbb{E}[E_1]+\sigma^2)}\log_2(e)  - \sum_{i=1}^n \frac{Z_i^2}{2\sigma^2}\log_2(e) \geq \log(\gamma_n) - nC_{EG}\right] \notag\\
&=&Pr\left[\sum_{i=1}^n \left(\frac{Z_i}{\sigma} - \frac{x_i(\overline{m},\mathbf{E})\sigma}{\mathbb{E}[E_1]}\right)^2 \leq \frac{2(\mathbb{E}[E_1] +\sigma^2)}{\mathbb{E}[E_1]}(nC_{EG}-\log \gamma_n)\ln 2 + \sum_{i=1}^n x_i^2(\overline{m},\mathbf{E}) \left( \frac{\sigma^2}{\mathbb{E}[E_1]^2} + \frac{1}{\mathbb{E}[E_1]}\right)\right] \notag\\
&=&Pr\left[\sum_{i=1}^n \left(\frac{Z_i}{\sigma} - \frac{x_i(\overline{m},\mathbf{E})\sigma}{\mathbb{E}[E_1]}\right)^2 \leq \frac{2(\mathbb{E}[E_1] +\sigma^2)}{\mathbb{E}[E_1]}(nC_{EG}-\log \gamma_n)\ln 2 + \sum_{i=1}^n E_i \left( \frac{\sigma^2}{\mathbb{E}[E_1]^2} + \frac{1}{\mathbb{E}[E_1]}\right)\right] \label{esw}
\end{IEEEeqnarray}
where (\ref{esw}) follows from (\ref{consteq}). Now, we condition the above probability term on $\mathbf{E}=\mathbf{e}$, noting that $\mathbf{E}$ is independent of $\mathbf{Z}$. We observe then that the probability is the cumulative distribution function (CDF) of a non-central $\chi^2$ distribution with $n$ degrees of freedom and non-centrality parameter 
\begin{equation}
B = \sum_{i=1}^n \frac{x_i^2(\overline{m},\mathbf{e})\sigma^2}{\mathbb{E}[E_1]^2} =\sum_{i=1}^n \frac{e_i\sigma^2}{\mathbb{E}[E_1]}.
\label{nccp}
\end{equation}
The CDF of a noncentral $\chi^2$ random variable $\hat{Z}$ equals 
\begin{equation}
Pr(\hat{Z} \leq u) = 1-Q^M_{n/2}(\sqrt{B},\sqrt{u}),
\label{Marc}
\end{equation}
where $Q^M_d(a,b)$ is the Marcum Q function of order $d$ (see \cite{marc}). Now we observe that the CDF does not depend on the individual $x_i$ or $e_i$ but rather on the sum of $e_i$. Replacing $x_i(\overline{m},\mathbf{E})$ with $\sqrt{E_i}$ in (\ref{esw}) will not change the CDF. Hence from \eqref{nccp} and \eqref{Marc}, (\ref{esw}) equals
\begin{equation}
Pr\left[\sum_{i=1}^n \left(\frac{Z_i}{\sigma} - \frac{\sqrt{E_i}\sigma}{\mathbb{E}[E_1]}\right)^2 \leq \frac{2(\mathbb{E}[E_1] +\sigma^2)}{\mathbb{E}[E_1]}(nC_{EG}-\log \gamma_n)\ln 2 + \sum_{i=1}^n E_i \left( \frac{\sigma^2}{\mathbb{E}[E_1]^2} + \frac{1}{\mathbb{E}[E_1]}\right)\right]
\label{esw2}
\end{equation}
This is precisely the structure we mentioned earlier that allows us to work with a simplified expression. As a result, the terms in the summation are i.i.d. (as opposed to just being independent). By suitably rearranging the terms, \eqref{esw2} equals
\begin{equation}
Pr\left[ \frac{\sum_{i=1}^n\eta_i}{\sqrt{nV_{EG2}}} \leq \frac{nC_{EG}-\log \gamma_n}{\sqrt{nV_{EG2}}} \right]
\label{berrys1}
\end{equation}
where $\eta_i$ are i.i.d. with zero mean and variance $V_{EG2} = \frac{\mathbb{E}[E_1]^2 + \mathbb{E}[E_1^2] + 4\sigma^2\mathbb{E}[E_1]}{4(\mathbb{E}[E_1]+\sigma^2)^2}\log_2^2(e)$. The third moment of $\eta_i$ is finite. Applying the Berry Esseen theorem (Lemma \ref{berry}) and picking $\log \gamma_n = nC_{EG} -\sqrt{nV_{EG2}}\Phi^{-1}(\alpha_n)$, where $\alpha_n$ is picked such that $0<\alpha_n<1-\varepsilon$ gives us
\begin{equation}
Pr\left[ \frac{\sum_{i=1}^n\eta_i}{\sqrt{nV_{EG2}}} \leq \frac{nC_{EG}-\log \gamma_n}{\sqrt{nV_{EG2}}} \right] \leq \alpha_n + \frac{\kappa}{\sqrt{n}}
\label{berrys2}
\end{equation} 
where $\kappa = \mathbb{E}[|\eta_i|^3]/V_{EG2}^{3/2}$. 

Pick $\alpha_n = 1-\varepsilon - \frac{2\kappa}{\sqrt{n}}$. For $n$ sufficiently large, $0<\alpha_n<1-\varepsilon$. From (\ref{Mb11}), (\ref{esw2}) and (\ref{berrys2}), we get
\begin{equation}
\log M \leq nC_{EG} -\sqrt{nV_{EG2}}\Phi^{-1}(\alpha_n) - \log(\kappa/\sqrt{n})\notag
%\label{conf11}
\end{equation}
Using Taylor series expansion on $\Phi^{-1}$ as well as bounding steps similar to the proof of achievability of Theorem \ref{TH1}, we obtain
\begin{equation}
\log M \leq nC_{EG} +\sqrt{nV_{EG2}}\Phi^{-1}(\varepsilon) + \frac{1}{2}\log n +O(1) \notag
%\label{conf12}
\end{equation}

\section{Finite blocklength converse bound for EH-DMC}\label{DMConv}
Unfortunately we cannot simply mirror the proof of the EH-AWGN channel converse in Section \ref{awgconv} as the AWGN channel structure that was exploited there is absent here. However, there is a different structure that can be exploited here, namely the method of types  (see \cite{Csiz}). We will be using the framework of Theorem \ref{Thuse}. Let $0<\varepsilon<1$ be given and the DMC of the EH-DMC be denoted by $W(y|x)$. The incoming energy random variables $E_i$ are i.i.d. as before. 
%It may happen that with the energy harvesting system in place, because of the way $\Lambda$ is defined, no input distribution is admissible. If this happens, the capacity is zero and so is the upper bound. 

Recall the definitions given in \eqref{conset0} and \eqref{conset}. We have from (\ref{metaeh3}),
\begin{equation}
M \leq \sup_{\mathbf{x}^n \in \mathbb{F}_{\overline{E}_n}}\frac{1}{\beta_{1-\varepsilon-\tau_n} \left(W(.|\mathbf{x}^n),Q_{Y^n}\right)}.
\label{dmet}
\end{equation}

We pick $\overline{E}_n = \mathbb{E}[E_1]+\delta_n.$ where $\delta_n > 0$. Then $\tau_n$ is given by
\begin{equation}
\tau_n = Pr\left(\sum_{i=1}^n E_i \geq n(\mathbb{E}[E_1]+\delta_n)\right).\notag
%\label{taun}
\end{equation}
We will ensure $\tau_n \leq \frac{\varepsilon}{4}$. To do this, pick $\delta_n = \frac{D_\varepsilon}{\sqrt{n}}$ where $D_{\epsilon} = \sqrt{\frac{4\sigma_E^2}{\varepsilon}}$ and use Chebyshev's inequality.

We can rewrite (\ref{dmet}) as
\begin{equation}
M \leq \sup_{P\in \mathcal{F}_{\overline{E}_n}\cap \mathcal{P}_n} \sup_{\mathbf{x}^n \in T_P}\frac{1}{\beta_{1-\varepsilon-\tau_n} \left(W(.|\mathbf{x}^n),Q_{Y^n}\right)}.
\label{dmet2}
\end{equation}
where $T_P$ denotes the type class of distribution $P$ and $\mathcal{P}_n$ is the set of all types for sequences of length $n$. Consider the inner supremum term,
\begin{equation}
\sup_{\mathbf{x}^n \in \mathcal{T}_P} \frac{1}{\beta_{1-\varepsilon-\tau_n}(W(.|\mathbf{x}^n),Q_{Y^n})}.\notag
%\label{insup}
\end{equation}
The beta error function above is independent of which sequence $\mathbf{x} $ is picked provided that the sequences have the same type \cite{PPV1} and $Q_{Y^n} = \prod_{k=1}^n Q_Y$ for some distribution $Q_Y$ on $\mathcal{Y}$ . Hence pick any sequence $\mathbf{x}$ from $\mathcal{T}_{P_0}$ where $P_0 \in \mathcal{F}_{\overline{E}_n}\cap \mathcal{P}_n$. 
%Note that this argument was used in proving Theorem 48 in \cite{PPV1} while applying the meta converse to a standard DMC. We show that the arguments in \cite{PPV1} are applicable here with only a few minor changes. 

Let $Q_Y=P_0W$. We recall  \cite[Theorem 48]{PPV1} for standard, non-exotic DMCs. Although this bounded the maximal subcode of type $P_0$ of the maximal code, we note that the term actually being bounded is the beta error function as mentioned below.
\begin{lem}\label{polthm}
For $0<\varepsilon < 1$, for all $P_0 \in \mathcal{P}_n$, $\mathbf{x}\in \mathcal{T}_{P_0}$ and $n$ sufficiently large, we have
\begin{equation}
- \log \beta_{1-\varepsilon}(W^n(.|\mathbf{x}),(P_0W)^n) \leq nC_D+\sqrt{nV_D}\Phi^{-1}(\varepsilon) +\frac{1}{2}\log n + O(1)\notag
%\label{polcon1}
\end{equation}
where 
\begin{equation}
V_D=\left\{ \begin{array}{c c}
	V_{\min} = \min \limits_{P\in \Gamma} V(P;W), &  0 <\varepsilon\leq 1/2, \\
	V_{\max}= \max \limits_{P\in \Gamma} V(P;W), & 1/2 < \varepsilon < 1,
\end{array}\right.\notag
%\label{dispdmc}
\end{equation}
and $\Gamma$ is the set of capacity achieving distributions.
\end{lem}

Note that the bound on RHS does not depend on the distribution of the type. Hence if we make the following substitutions:
\begin{enumerate}
	\item Replace $\Gamma$ with
	\begin{equation}
	\Gamma_{\overline{E}_n} = \{P\in \mathcal{F}_{\overline{E}_n}: I(P;W)=C_{ED}\}
	\label{gamcos}
	\end{equation}
This is because the outer supremum in (\ref{dmet2}) is over $\mathcal{F}_{\overline{E}_n}$. Note that the original proof of Lemma \ref{polthm} used the fact that $\Gamma$ was compact and convex. These properties hold for $\Gamma_{\overline{E}_n}$ so we may substitute this wherever $\Gamma$ was used.
\item The final supremum that gives the uniform (over input distributions) bound was over $\mathcal{P}$. Here we substitute $\mathcal{F}_{\overline{E}_n}$ in its place.
\item $\varepsilon$ is replaced by $\varepsilon + \tau_n$.
\end{enumerate}
then
\begin{equation}
\log M^*(n,\varepsilon) \leq nC_D(\overline{E}_n) +\sqrt{n\hat{V}(\overline{E}_n)}\Phi^{-1}\left(\varepsilon + \tau_n\right) +O(\log(n)),
\label{DMConf}
\end{equation}
where $C_D(.)$ is defined in \eqref{cost1} and
\begin{equation}
\hat{V}(\overline{E}_n) = \left\{ \begin{array}{c c}
	V_{\min}^{(n)} = \min \limits_{P\in \Gamma_{\overline{E}_n}} V(P;W), &  0 <\varepsilon + \tau_n \leq 1/2, \\
	V_{\max}^{(n)}= \max \limits_{P\in \Gamma_{\overline{E}_n}} V(P;W), & 1/2 < \varepsilon+ \tau_n < 1.
\end{array}\right.
\label{disdef}
\end{equation}
We can further simplify (\ref{DMConf}) by expanding $C_D(\overline{E}_n)$, $\hat{V}(\overline{E}_n)$ and $\Phi^{-1}(u)$. 

Now $C_D(a)$ is a non-decreasing concave function (see \cite{Csiz}). Hence we have for any $a>0, b > 0$,
\begin{equation}
C_D(a+b) \leq C_D(a) + bC_D'(a),\notag
%\label{concav}
\end{equation}
where $C_D'(.)$ is the derivative of $C_D(a)$. Let $a=\mathbb{E}[E_1]$ and $b=\delta_n$. Note that $C_D'(a)$ in this case is a constant since $\mathbb{E}[E_1]$ is a constant. 

Using Taylor series expansion, we get that for some constant $K_\varepsilon$,
\begin{equation}
\Phi^{-1}(\varepsilon + \tau_n) \leq \Phi^{-1}(\varepsilon) + \tau_nK_\varepsilon.\notag
%\label{phit}
\end{equation}
Now let $\varepsilon_R$ be the root of 
\begin{equation}
\Phi^{-1}(\varepsilon) + \frac{K_{\varepsilon}\varepsilon}{4}=0.\notag
%\label{rooty}
\end{equation}
Pick any $\eta>0$. Observe that for $n$ sufficiently large, $\Gamma_{\overline{E}_n} \subset \Gamma_{\mathbb{E}[E_1]+\eta}$. Hence we can replace $\hat{V}(\overline{E}_n)$ with
\begin{equation}
V^*_{\varepsilon}(\eta) = \left\{ \begin{array}{c c}
	  \min \limits_{P\in \Gamma_{\mathbb{E}[E_1]+\eta}} V(P;W), &  0 <\varepsilon \leq \varepsilon_R, \\
	 \max \limits_{P\in \Gamma_{\mathbb{E}[E_1]+\eta}} V(P;W), &  \varepsilon_R< \varepsilon < 1.
\end{array}\right.\notag
%\label{Vfinale}
\end{equation}
Note that $C_D(\mathbb{E}[E_1]) \equiv C_{ED}$, Thus we have for $n$ sufficiently large
\begin{equation}
\log M^*(n,\varepsilon) \leq nC_{ED} +\sqrt{n}C'(\mathbb{E}[E_1])D_{\varepsilon} +   \sqrt{nV^*_{\varepsilon}(\eta)}\left(\Phi^{-1}(\varepsilon) + \frac{K_{\varepsilon}\varepsilon}{4} \right) +O(\log n).\notag
%\label{dmcfinq}
\end{equation}

\section{Moderate Deviation Asymptotics}
In this section, we discuss the bounds on the moderate deviation asymptotics for the EH-AWGN channel and the EH-DMC. In this analysis, unlike in the second order analysis in the previous sections, we allow probability of error to go to zero as a function of blocklength $n$. However we do so in the moderate deviations regime which is defined formally as follows (see \cite{polymd}).

\begin{defi}[Moderate Deviation coefficient]
Given a channel $W$, let $\rho_n$ be a sequence of non-negative real numbers such that $\rho_n \to 0$ and $n\rho_n^2 \to \infty$. Then for codes of size $M_n$ satisfying $\log M_n = n(C - \rho_n)$, where $C$ is the channel capacity, the moderate deviations coefficient (MDC) $\xi$, if it exists, is defined as 
\begin{equation}
\xi = \lim_{n \to \infty} \frac{\log \varepsilon(n)}{n\rho_n^2}, \notag
%\label{mdev}
\end{equation}
where $\varepsilon(n)$ is the probability of error as a function of blocklength $n$.
\end{defi}

For memoryless channels with channel dispersion $V>0$, it was shown in \cite{polymd} that $\xi = -\frac{1}{2V}$ is the moderate deviation coefficient. In case of energy harvesting channels, it is more involved. This is due to not knowing the exact dispersion value as well as the fact that energy harvesting channels are not truly memoryless due to the energy vector. However they have a part which is memoryless and this is what we have been exploiting so far in our analysis. 

\subsection{MDC for EH-AWGN channels}
We now state the following theorem bounding the MDC for EH-AWGN channels.
\begin{theo}
For an EH-AWGN channel with energy process $E_i$ i.i.d. with variance $\sigma_E^2$, the MDC is bounded as
\begin{IEEEeqnarray}{rCl}
\liminf_{n\to \infty} \frac{\log \varepsilon(n)}{n\rho_n^2} &\geq& -\frac{1}{2V_{EG2}}, \label{awmdlb}\\
\limsup_{n\to \infty} \frac{\log \varepsilon(n)}{n\rho_n^2} &\leq& -\frac{1}{2V_{EG}}, \label{awmdub}
\end{IEEEeqnarray}
where $V_{EG}$ is defined in (\ref{ehawgnclb}) and $V_{EG2}$ is defined in (\ref{ehawgncub}).
\end{theo}
\begin{proof}
To show (\ref{awmdlb}), let us consider (\ref{Mb11}) whose terms are rearranged, replacing $\varepsilon$ with $\varepsilon(n)$, as
\begin{equation}
\varepsilon(n) \geq  Pr\left[\log \frac{W(\mathbf{Y}^n|\mathbf{x}^n(\overline{m},\mathbf{E}))}{Q_{Y^n}} \leq \log \gamma_n\right] - \frac{\gamma_n}{M}.\notag
%\label{mden1}
\end{equation} 

We also have from (\ref{berrys1}) that 
\begin{equation}
 Pr\left[\log \frac{W(\mathbf{Y}^n|\mathbf{x}^n(\overline{m},\mathbf{E}))}{Q_{Y^n}} \leq \log \gamma_n\right] = Pr\left[ \sum_{i=1}^n\eta_i \geq nC_{EG}-\log \gamma_n \right].\notag
%\label{mden2}
\end{equation}
Now let $\log M = n(C_{EG}-\rho_n)$ and $\log \gamma_n = n(C_{EG}-\alpha\rho_n)$ for any $\alpha>1$. From \cite[Theorem 3.7.1]{demb}, we get
\begin{equation}
\liminf_{n\to \infty} \frac{ \log Pr\left[ \sum_{i=1}^n\eta_i \geq nC_{EG}-\log \gamma_n \right]}{n\rho_n^2} \geq -\inf_{x\geq\alpha} \frac{x^2}{2V_{EG2}} = -\frac{\alpha^2}{2V_{EG2}},  \notag
%\label{mden3}
\end{equation}
where noting that $V_{EG2}$ is the variance of $\eta_i$ and letting $\alpha \to 1$, we get (\ref{awmdlb}).

To show (\ref{awmdub}), we need to modify some of our arguments which we used while discussing the save and transmit scheme. This is because we need to show that codes of $\log M = n(C_{EG} - \rho_n)$ exist. The analysis so far was done so as to work with the optimum order of $\sqrt{n}$. This is not valid anymore as $\rho_n > 1/\sqrt{n}$.

%\begin{enumerate}
	%\item As $\varepsilon(n)$ is now a function of $n$, several arguments that assumed it to be constant are no longer valid. For example, we had defined $N_n = K_{\varepsilon,\lambda}\sqrt{n}$. But now as $K_{\varepsilon(n),\lambda}$ is also a function of $n$, if we are not careful, $N_n$ may grow as large or larger than $n$ which is not feasible.
%\end{enumerate}

Recalling error events $\mathcal{E}_0$ from (\ref{egout}) and $\mathcal{E}_1$ from (\ref{ehviol}), we will show that with an appropriate choice for $N_n$ and $E_{0n}$, we can set
\begin{equation}
Pr(\mathcal{E}_0)+ Pr(\mathcal{E}_1) \leq \frac{\varepsilon(n)}{2}.\notag
%\label{errmd1}
\end{equation}
To ensure this, let us choose
\begin{equation}
N_n = \max\left\{ \frac{16\sigma_E^2}{\varepsilon(n)\mathbb{E}[E_1]^2}, \frac{4\sqrt{n(2\mathbb{E}[E_1]^2+\sigma_E^2)}}{\mathbb{E}[E_1]\sqrt{\varepsilon(n)}}\right\}.\notag  
%\label{Nnn}
\end{equation}
Clearly $N_n \to \infty$ as $n \to \infty$ and both $Pr(\mathcal{E}_0)$ and $Pr(\mathcal{E}_1)$ are each upper bounded by $\varepsilon(n)/4$. 

Hence the probability of error $\varepsilon(n)$ is bounded by
\begin{IEEEeqnarray}{rCl}
\varepsilon(n) &\leq& \frac{\varepsilon(n)}{2} +  Pr\left[\log \left(\frac{W^n(\mathbf{Y}^n|\mathbf{X}^n)}{P_{\mathbf{Y}^n}(\mathbf{Y}^n)}\right)\leq \log \gamma_n\right] +\frac{M}{\gamma_n},\notag\\
\frac{\log (\varepsilon(n)/2)}{n\rho_n^2} &\leq&\frac{1}{n\rho_n^2} \log \left[Pr\left\{ \sum_{i=1}^n G_i \leq \log \gamma_n \right\} + 2^{-(1-\alpha)n\rho_n}\right]\notag
%\label{en2}
\end{IEEEeqnarray}

Now let $\log \gamma_n = n(C_{EG} - \alpha\rho_n)$ where $\alpha<1$ and $\log M = n(C_{EG}-\rho_n)$. Codes of the latter size are assured by Feinstein's lemma. Now from (\ref{erro4}) and  \cite[Theorem 3.7.1]{demb}, we have

\begin{equation}
\limsup_{n \to \infty} \frac{1}{n\rho_n^2} \log Pr\left\{ \sum_{i=1}^n G_i \leq \log \gamma_n \right\} \leq -\inf_{x\leq-\alpha} \frac{x^2}{2V_{EG}} = -\frac{\alpha^2}{2V_{EG}}.
\label{en3}
\end{equation}
Letting $\alpha\to 1$, we get (\ref{awmdub}).
\end{proof}

%% Put proof of DMC here.
\subsection{MDC for EH-DMC}
The MDC bounds for EH-DMC should be analogous to that of the EH-AWGN channel. However since $V_{ED}$ varies with the choice of $\lambda$, we need to refine it slightly. 
\begin{theo}
For the EH-DMC, the following bounds on MDC apply.
\begin{IEEEeqnarray}{rCl}
\liminf_{n\to \infty} \frac{\log \varepsilon(n)}{n\rho_n^2} &\geq& -\inf_{\eta>0}\frac{1}{2V_{\min,\eta}}, \label{dmdlb}\\
\limsup_{n\to \infty} \frac{\log \varepsilon(n)}{n\rho_n^2} &\leq& -\frac{1}{2V_{\min}}, \label{dmdub}
\end{IEEEeqnarray}
where $V_{\min} = \min \limits_{P \in \Gamma_{\mathbb{E}[E_1]}} V(P;W)$ and $V_{\min,\eta} = \min \limits_{P \in \Gamma_{\mathbb{E}[E_1]+\eta}} V(P;W)$ where $\Gamma$ is the set of capacity achieving input distributions that are in $\mathcal{F}_{\mathbb{E}[E_1]}$.
\end{theo}
\begin{proof}
Bound (\ref{dmdlb}), follows from \cite[Theorem 6]{polymd} with the following changes:

\begin{enumerate}
	\item The distributions need to be admissible i.e. from $\mathcal{F}_{\overline{E}_n}$.
	\item $\varepsilon(n)$ is to be replaced with $\varepsilon(n)+\tau_n$. But as per our construction, $\tau_n < \varepsilon(n)/4$. Hence it is same as replacing $\varepsilon(n)$ with $\frac{5}{4}\varepsilon(n)$.
\end{enumerate}

%consider any capacity achieving admissible distribution $P^*$. Then because $\mathcal{F}_{\overline{E}_n}$ is a compact set, there is a sequence of admissible types $P_{0n}$ that converge to $P^*$. As $V(P;W)$ is continuous, this will imply $V(P_{0n};W) \to V(P^*;W)$.
%
%Recall (\ref{insup}) for an admissible type $P_{0n}\in \mathcal{P}_n$. Since the expression is independent of the sequence in the type we have for $\mathbf{x}\in T_{P_{0n}}$ and (\ref{betlb}) that
%\begin{equation}
%\varepsilon_n \geq 1-\tau_n -Pr\left[\log \frac{W(\mathbf{Y}|\mathbf{x})}{P_{0n}W(\mathbf{Y})}\geq \log \gamma_n\right] - \frac{\gamma_n}{M}.
%\label{enc1}
%\end{equation}
%Note that we had set $\tau_n < \varepsilon/4$. Substituting this bound here gives us
%\begin{equation}
%\frac{5}{4}\varepsilon_n \geq Pr\left[\log \frac{W(\mathbf{Y}|\mathbf{x})}{P_{0n}W(\mathbf{Y})}\leq \log \gamma_n\right] - \frac{\gamma_n}{M}.
%\label{enc2}
%\end{equation}
%Let $\log M = n(C_{ED} - \rho_n)$ and for any $\alpha>1$, $\log \gamma_n = n(C_{ED} - \alpha\rho_n)$. Now in \cite[Theorem 6]{polymd}, it was shown that 
%\begin{equation}
%\liminf_{n \to \infty}\frac{1}{n\rho_n^2}\log\left( Pr\left[\log \frac{W(\mathbf{Y}|\mathbf{x})}{P_{0n}W(\mathbf{Y})}\leq \log \gamma_n\right] - \frac{\gamma_n}{M}\right) \geq - \frac{\alpha^2}{2V(P^*;W)}
%\label{enc3}
%\end{equation}

To prove (\ref{dmdub}), we note that the steps are very similar to the proof of (\ref{awmdub}). To begin with, pick a capacity achieving distribution $P_X$ and follow all the steps exactly as before. We get 
\begin{equation}
\limsup_{n\to \infty} \frac{\log \varepsilon(n)}{n\rho_n^2} \leq -\frac{1}{2V(P_X;W)}. \notag%\label{dmdub2}
\end{equation}
Since this is valid for any $P_X \in \Gamma$, the tightest bound is obtained when we replace $V(P_X;W)$ with $V_{\min}$.

\end{proof}

\section{Numerical Results}
We now evaluate and plot the finite blocklength bounds on rate as well as the slots consumed in the saving part of save and transmit as a function of blocklength. We use the formulae derived in the earlier sections, for a specified set of parameters, to evaluate the aforementioned quantities. For the EH-DMC, we describe an energy harvesting binary symmetric channel (BSC) and a binary erasure channel (BEC) and plot the corresponding bounds for these. Note that in all plots, we are ignoring the constant terms in the bounds i.e. coefficients of $O(1/n)$ in the rates. Additionally, we compare our results with the finite blocklength lower bounds of an equivalent non-energy harvesting channel. For example, in the EH-AWGN case, we consider an AWGN channel with average power constraint $\mathbb{E}[E_1]$, while in the EH-DMC cases, we consider corresponding DMCs with power constraint $\mathbb{E}[E_1]$. This will allow us, when the equivalent channel's lower bound is above the energy harvesting upper bound, to comment on the effects of energy harvesting on rate. The gap in rates mentioned henceforth will be the difference between the bounds divided by the upper bound, expressed as a percentage.

\subsection{EH-AWGN results}
We take the maximal probability of error, $\varepsilon = 0.1$, $\mathbb{E}[E_1] = 1$ and $\sigma_E^2 = 5$. We consider blocklengths $n$ between $5000$ to $10000$. We consider three different regimes i.e.,
\begin{enumerate}
	\item Low SNR ($-20$ dB). In this regime (Fig. \ref{Plotb1}), we observe that the lower bound is a poor approximation to the finite blocklength rate. Due to a larger number of errors, this regime also requires more slots to harvest energy to lower the error due to outage (about $20.5 \%$ to $27.6\%$).
	\begin{figure}
		\includegraphics[width=\columnwidth]{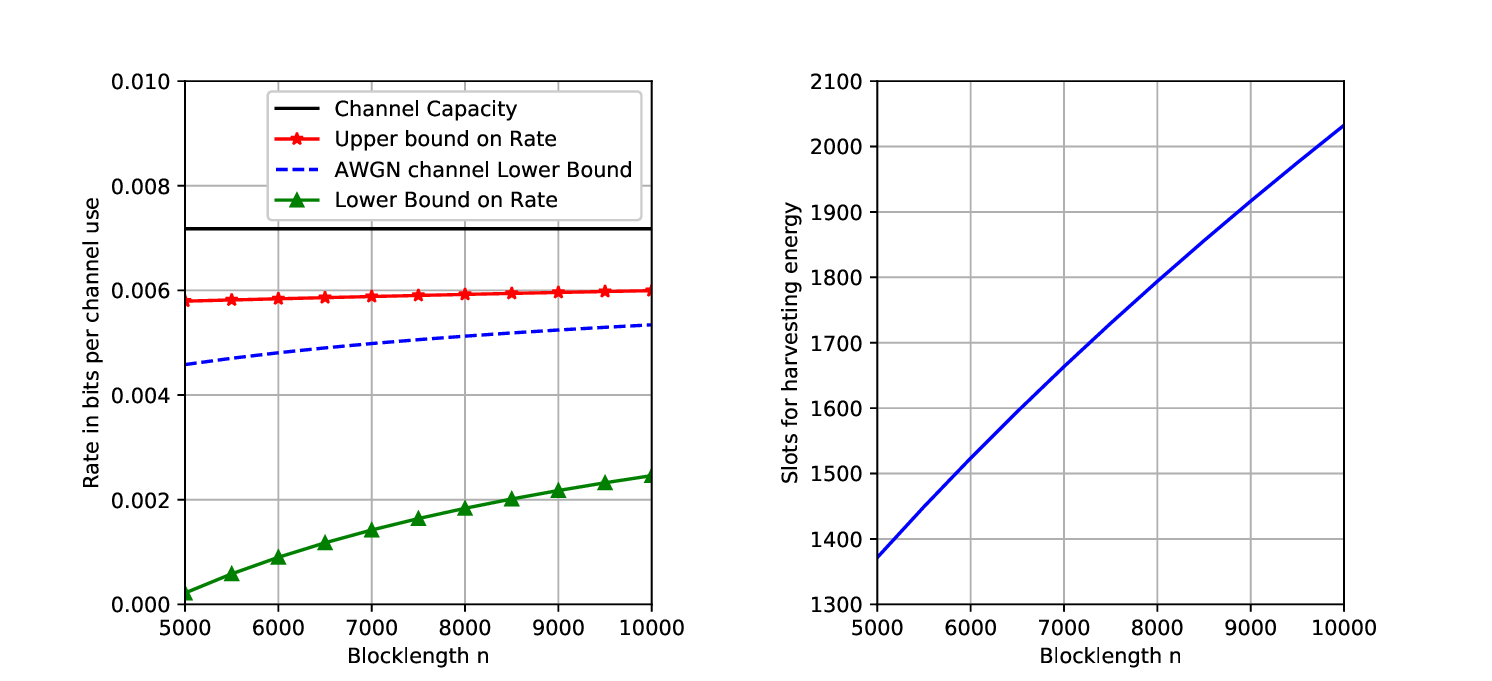}
		\caption{Plot of FB rates for an EH-AWGN channel versus the total blocklength (harvesting plus transmission) in low SNR regime. The other plot shows the number of slots used for harvesting energy.}\label{Plotb1}
	\end{figure}
	\item Moderate SNR ($0$ dB). Compared to low SNR, this regime (see Fig. \ref{Plotb2}) gives a better approximation to finite blocklength. The gap in rates is significantly lowered to approximately $19\%$ to $27\%$. Additionally, the number of slots required in the saving phase are also considerably reduced ($16\%$ to $22\%$).
	\begin{figure}
		\includegraphics[width=\columnwidth]{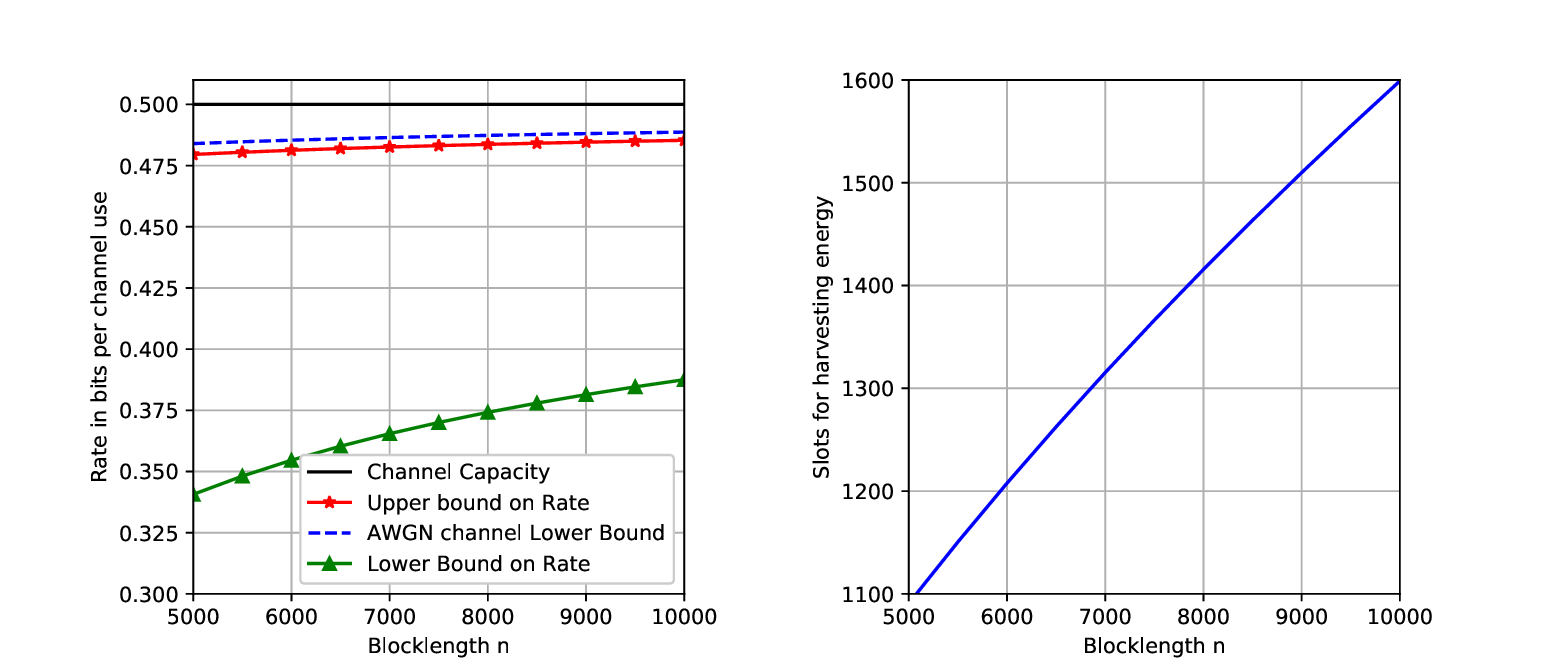}
		\caption{Plot of FB rates for an EH-AWGN channel versus the total blocklength (harvesting plus transmission) in moderate SNR regime. The other plot shows the number of slots used for harvesting energy.}\label{Plotb2}
	\end{figure} 
	\item High SNR ($20$ dB). In this regime (Fig. \ref{Plotb3}), the gap between rates is about $18.2\%$ to $24.2\%$ and the slots required in saving energy is between $15.8\%$ to $21.6\%$. While this is an improvement from moderate SNR, it is not as significant as that between low SNR to moderate SNR.
	\begin{figure}
		\includegraphics[width=\columnwidth]{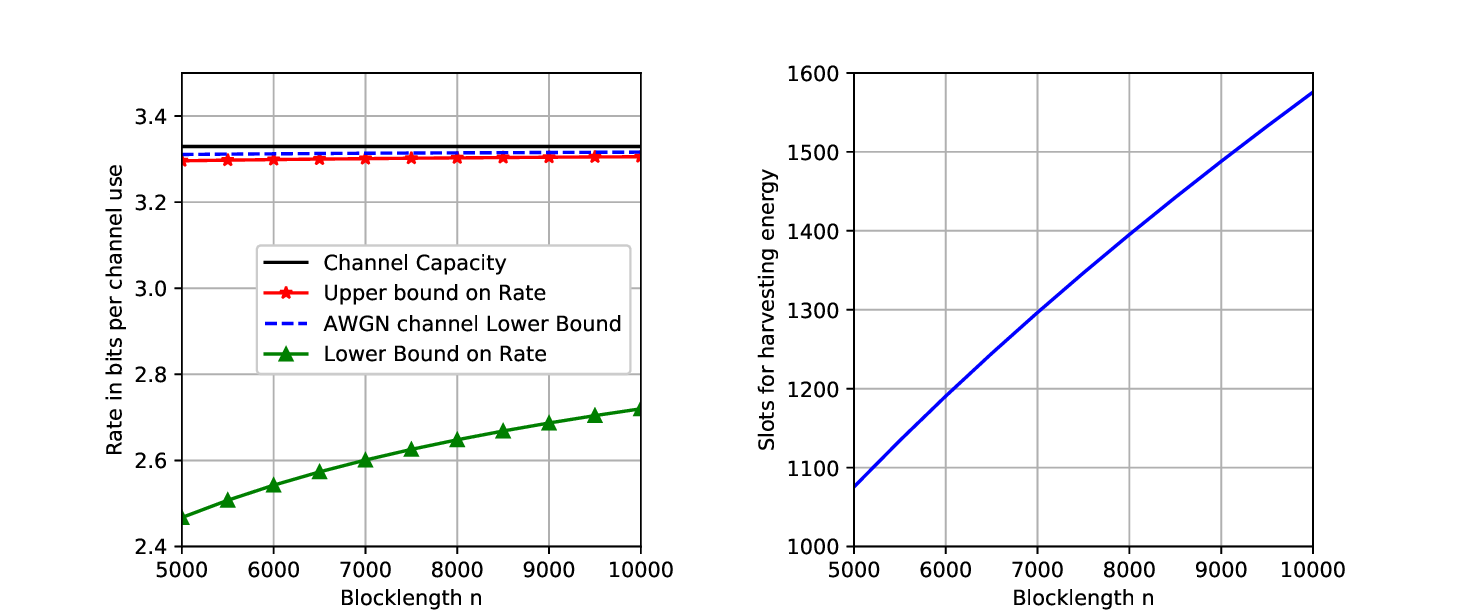}
		\caption{Plot of FB rates for an EH-AWGN channel versus the total blocklength (harvesting plus transmission) in high SNR regime. The other plot shows the number of slots used for harvesting energy.}\label{Plotb3}
	\end{figure}
\end{enumerate}
To summarize, the finite blocklength bounds are decent approximations to the finite blocklength rate in the moderate to high SNR regime. Further improvements would require an improved lower bound which would require changing the transmission scheme. Except for the low SNR case, we observe that the energy harvesting upper bound is below the lower bound of the equivalent AWGN channel. We can infer from this that the finite blocklength energy harvesting rates are lower than that of the non energy harvesting case in the moderate and high SNR regime. 

\subsection{EH-BSC}
Consider a binary symmetric channel W, with crossover probability $\alpha$. That is $\mathcal{X} =\mathcal{Y}= \{0,1\}$, $W(0|1) = W(1|0) = \alpha$. Let $p_0 := Pr(X = 0)$. If the capacity achieving distribution, which satisfies the energy harvesting requirements, is unique with $p_0$ as before, then
\begin{IEEEeqnarray}{rCl}
	C_{ED} = C_{BSC} &=& h(\alpha p_0 + \overline{\alpha}~\overline{p_0}) - h(\alpha),\notag\\
	V(P;W) = V_{BSC} &=& \sum_{x \in \{\alpha,\overline{\alpha}\}} \sum_{y \in \{p_0, \overline{p_0}\}} xy\left[ \log \left(\frac{x}{xy + \overline{x}~\overline{y}}\right)\right]^2 - C_{BSC}^2.\notag
\end{IEEEeqnarray}
where $\overline{u} := 1-u$ and $h(x) = -x\log_2(x)-\overline{x}\log_2(\overline{x})$ is the binary entropy function. Note that the choice of $p_0$ is influenced by energy harvesting constraints. In this example, we pick $\alpha = 0.05$, the energy function $\Lambda(x) = 3x$ and $\mathbb{E}[E_1] = 1$. This ensures the uniqueness of the capacity achieving distribution with $p_0 = 2/3$. We take $\varepsilon = 0.1$ and $\sigma_E^2 = 0.2$ here. Fig. \ref{Plotbbsc} plots the lower and upper bounds for this example where $n$ is between $5000$ and $10000$. 

\begin{figure}
	\includegraphics[width=\columnwidth]{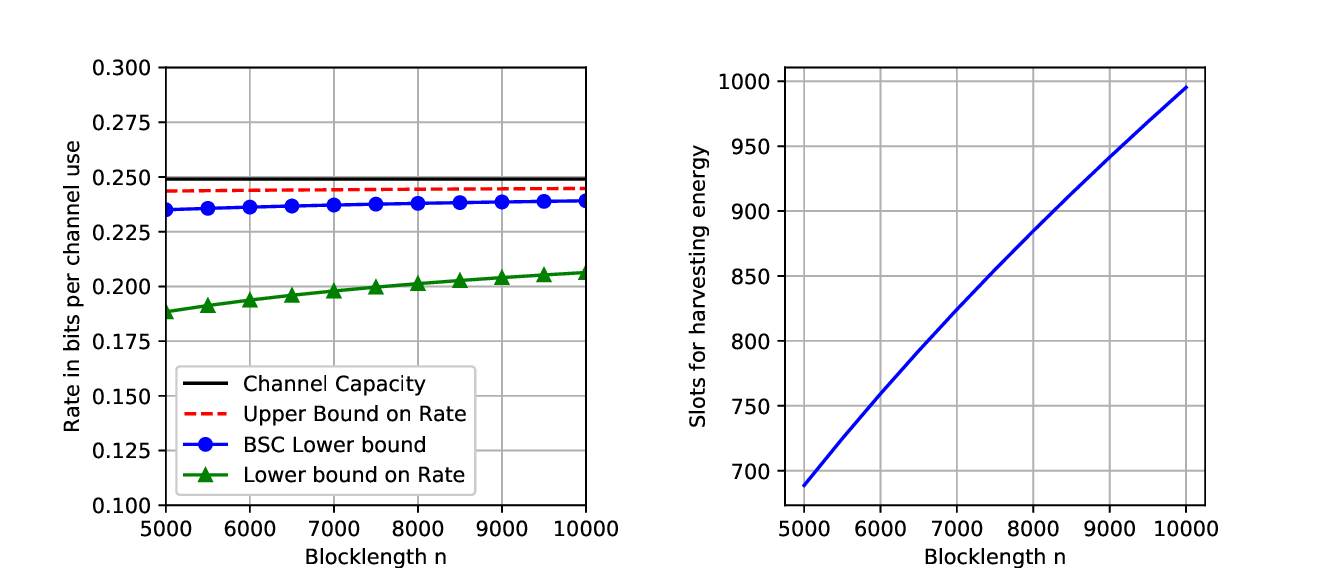}
	\caption{Plot of FB rates for an EH-BSC channel versus the total blocklength (harvesting plus transmission). The plot on the right gives the number of slots used for harvesting energy.}\label{Plotbbsc}
\end{figure}
 
 We observe that the difference between upper and lower bounds, for this example is between $13.7\%$ to $23\%$. The blocklength required for saving energy varies from $9.8\%$ to $13.8\%$ in this range. In this case, the non-energy harvesting lower bound is below the energy harvesting upper bound. Hence we cannot infer anything about the rates as a function of $\sigma_E^2$ here.
  
\subsection{EH-BEC}
A binary erasure channel $W$ is a channel with binary inputs $\mathcal{X} = \{0,1\}$, ternary outputs $\mathcal{Y} = \{0,E_R,1\}$ with $W(0|0) = W(1|1)= 1-\alpha$ and $W(E_R|0) = W(E_R|1) = \alpha$, where $\alpha$ is the erasure probability. Similar to the BSC case, if we have a unique capacity achieving distribution, $p_0 = Pr(X=0)$, then
\begin{IEEEeqnarray}{rCl}
	C_{ED} = C_{BEC} &=& (1 - \alpha)h(p_0),\notag\\
	V(P;W) = V_{BEC} &=& (1-\alpha)p_0(\log(p_0))^2 + (1-\alpha)(1-p_0)(\log(1-p_0))^2- C_{BEC}^2.\notag
\end{IEEEeqnarray}
Using the same parameters as in the BSC case, we plot the bounds in Fig. \ref{Plotbbec}.
\begin{figure}
	\includegraphics[width=\columnwidth]{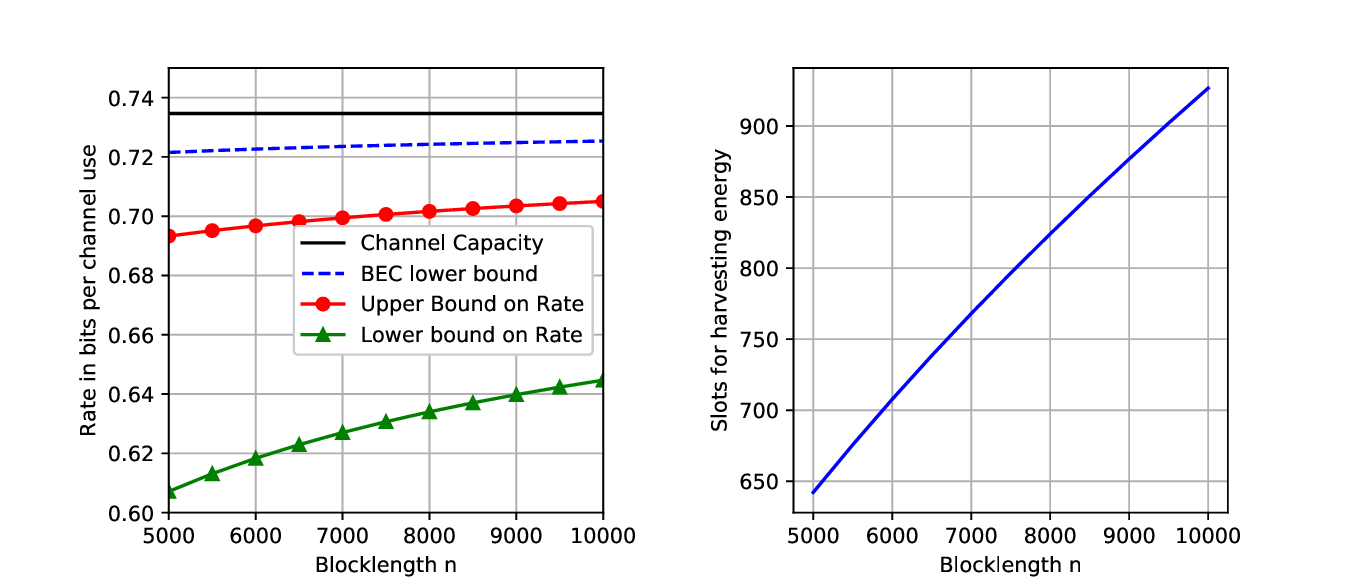}
	\caption{Plot of FB rates for an EH-BEC channel versus the total blocklength (harvesting plus transmission). The plot on the right gives the number of slots used for harvesting energy.}\label{Plotbbec}
\end{figure}

We observe a difference of $8.6\%$ to $12.2\%$ between the upper and lower bounds as well as saving energy slot utilization of $9.3\%$ to $12.8\%$ for the specified range of parameters. Here our bounds appear to better approximate the rates as opposed to BSC. Moreover, the non-energy harvesting lower bound is above the upper bound meaning that in this case, the effects of energy harvesting are detrimental to the rate.

\subsection{Effects of energy harvesting variance $\sigma_E^2$}
Comparing the bounds \eqref{ehawgnclb} and \eqref{ehawgncub} derived for EH-AWGN channel, we observe that both bounds are lowered with increasing $\sigma_E^2$. This is illustrated in Fig. \ref{Plotmany}.
\begin{figure}
	\includegraphics[width=\columnwidth]{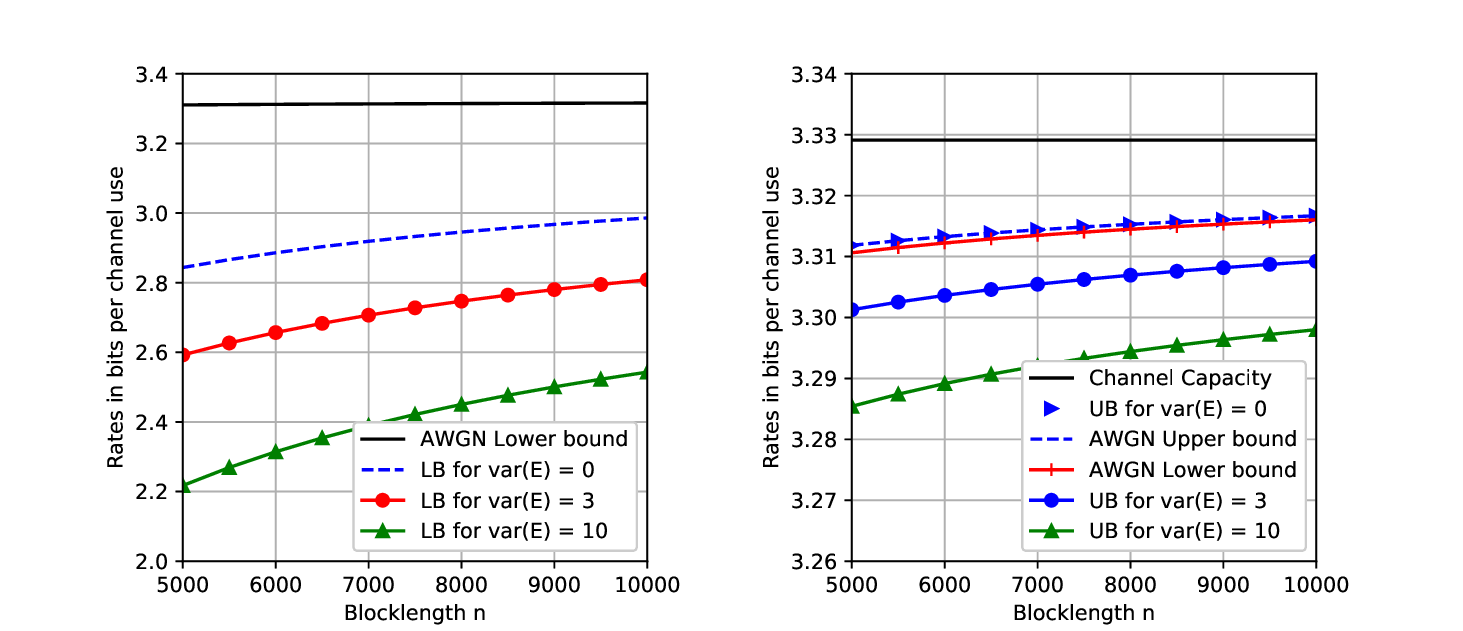}
	\caption{Plot of finite blocklength rates for an EH-AWGN for different energy harvesting variances.}\label{Plotmany}
\end{figure}
Interestingly, when compared to the AWGN lower bound, the EH-AWGN upper bound appears to only differ by $O(\log n/n)$ when $\sigma_E^2 = 0$. However the lower bound is strongly affected by the variance.

\section{Discussion of Results and Conclusion}
In this manuscript, we have shown that for both EH-AWGN and EH-DMC channels, the finite blocklength code size varies as $nC - \Theta(\sqrt{n})$ under the maximal probability of error criterion. This was shown by deriving lower and upper bounds with second order $\sqrt{n}$. We also bounded the moderate deviation asymptotics for both channel types. 

Additionally, the bounds were plotted for a few examples. In certain cases, such as the AWGN channel with moderate to high SNR as well as the BEC case, we observed that the rates are exacerbated with increased variance of the energy harvesting process. It is desirable to tighten the gap between lower and upper bounds so that this conjecture may be further verified.
%Based on the numerical results, it would indicate a substantial improvement would be possible by changing the achievable scheme. Unfortunately there doesn't appear to be a scheme as of now that can rival save and transmit in terms of finite blocklength bounds.
As future work, obtaining matching bounds in the finite blocklength as well as the moderate deviations regime will be useful.
\appendices
\section{Proof of (\ref{metaeh2}) }\label{stpr}
Let $U \in [M]$ denote the message to be transmitted and similarly $\hat{U}$ the decoded message. We have for channel $W$, if the maximal probability of error is $\varepsilon$, the following steps.
\begin{IEEEeqnarray}{rCl}
1-\varepsilon &\leq& Pr[\hat{U}=m|U=m)\notag \\
&=& \int_{\mathbf{y},\mathbf{e}}Pr[\hat{U}=m|\mathbf{Y}=\mathbf{y}]W(\mathbf{y}|c(m,\mathbf{e}))dP_\mathbf{E}(\mathbf{e}) \label{bet11}
\end{IEEEeqnarray}
where the above holds for any message $m$.

Now $Pr[\hat{U}=m|\mathbf{Y}=\mathbf{y}]$ is a test on the decoder end that achieves the probability of error requirement. Even though it doesn't depend on $\mathbf{e}$, since the decoder doesn't have access to the energy samples, it is still a valid test on $(\mathbf{y},\mathbf{e})$. 

Now suppose instead of channel $W$, the message is sent on channel $Q_Y$ which is an auxiliary channel that ignores the input but has the same output alphabet. Using the above decoder, let $\overline{m}$ be the message that achieves the maximal probability of error under $Q_Y$. Then clearly $P(\hat{U}=\overline{m}|U=\overline{m}) \leq \frac{1}{M}$ under $Q_Y$. But then, from (\ref{bet11}) and the definition of the beta error function, we have
\begin{equation}
\beta_{1-\varepsilon} \left(W(.|c(\overline{m},*))P_{\mathbf{E}}(*),Q_{\mathbf{Y}}P_{\mathbf{E}}\right) \leq \int_{\mathbf{y}}P(\hat{U}=\overline{m}|\mathbf{Y}=\mathbf{y} )Q_{\mathbf{Y}}(\mathbf{y}) \leq \frac{1}{M}\notag
%\label{finb11}
\end{equation}

\section{Proof of Theorem \ref{Thuse}} \label{Thusea}
The proof follows the steps used in proving the original meta-converse (see \cite{PPV1}) upto a point. Given distribution $Q_{\mathbf{Y}}$, which is essentially a reference channel that does not depend on input, let the maximal probability of error for this ``channel'' be $\varepsilon'$. Let $U$ be the random variable denoting the message to be sent and $\hat{U}$ be the message that was decoded.

Consider the definition of maximal probability of error. We see that there is a message, call it $\overline{m}$ such that
\begin{equation}
1 - \varepsilon' = Pr\left[\hat{U} = \overline{m}| U = \overline{m}\right] = \int\limits_\mathbf{y} P_{\hat{U}|\mathbf{Y}}(\overline{m}|\mathbf{y})dQ_{\mathbf{Y}}(\mathbf{y}).
\label{maxp}
\end{equation}
But we also have 
\begin{IEEEeqnarray}{rCl}
1 - \varepsilon' &=& \min \limits_m Pr\left[\hat{U} = m| U = m\right]\notag \\
&\leq& \frac{1}{M} \sum_{m=1}^M Pr\left[\hat{U} = m| U = m\right]\notag\\
&=& \frac{1}{M} \sum_{m=1}^M \int \limits_{\mathbf{y}} Pr\left[\hat{U} = m| \mathbf{Y} = \mathbf{y}\right]dQ_{\mathbf{Y}}(\mathbf{y})\notag \\
&=& \frac{1}{M} \int \limits_{\mathbf{y}} \left( \sum_{m=1}^M Pr\left[\hat{U} = m| \mathbf{Y} =\mathbf{y}\right] \right) dQ_{\mathbf{Y}}(\mathbf{y})\notag \\
&=&\frac{1}{M}
\label{maxp2}
\end{IEEEeqnarray}
Combining equation (\ref{maxp}) and (\ref{maxp2}), we get
\begin{equation}
M \leq \frac{1}{\int\limits_\mathbf{y} P_{\hat{U}|\mathbf{Y}}(\overline{m}|\mathbf{y})dQ_{\mathbf{Y}}(\mathbf{y})}.
\label{maxp3}
\end{equation}

%First we see that there is a message $\overline{m}$ for which
%\begin{equation}
%\frac{1}{M} \geq 1-\varepsilon' = \int\limits_y P_{\hat{U}|Y^n}(\overline{m}|y^n)dQ_{\mathbf{Y}}(y^n).
%\end{equation}
%where $\varepsilon'$ is the maximal probability of error if we had $Q_{\mathbf{Y}}$ as the channel. 

Now we have for any $\mathcal{E}_1\subset \mathbb{R}_+^n$,
\begin{IEEEeqnarray}{rCl}
1-\varepsilon &\leq& \int\limits_{\mathbf{e}} \int\limits_{\mathbf{y}} P_{\hat{U}|\mathbf{Y}}(\overline{m}|\mathbf{y})dP_{\mathbf{Y}|\mathbf{X}}(\mathbf{y}|c(\overline{m},\mathbf{e}))dP_\mathbf{E}(\mathbf{e})\notag \\
&\leq& \int \limits_{\mathbf{e}\in \mathcal{E}_1} \int\limits_{\mathbf{y}} P_{\hat{U}|\mathbf{Y}^n}(\overline{m}|\mathbf{y})dP_{\mathbf{Y}|\mathbf{X}}(\mathbf{y}|c(\overline{m},\mathbf{e}))dP_\mathbf{E}(\mathbf{e}) +P_\mathbf{E}(\mathcal{E}_1^c).\notag
\end{IEEEeqnarray}
Rearranging and using the definitions given in the statement of the lemma, letting $\mathcal{E}_1 = \{\mathbf{e}: \sum_{i=1}^n e_i \leq n\overline{E}_n\}$ and $\tau_n = P_E(\mathcal{E}_1^c)$, we get
\begin{IEEEeqnarray}{rCl}
1-\varepsilon-\tau_n &\leq  \int \limits_{\mathbf{e}\in \mathcal{E}_1} \int\limits_{\mathbf{y}}\notag P_{\hat{U}|\mathbf{Y}}(\overline{m}|\mathbf{y})dP_{\mathbf{Y}|\mathbf{X}}(\mathbf{y}|c(\overline{m},\mathbf{e}))dP_{\mathbf{E}}(\mathbf{e}) \\
\Rightarrow 1-\varepsilon-\tau_n \leq\frac{1-\varepsilon-\tau_n}{1-\tau_n} &\leq  \int\limits_{\mathbf{y}} P_{\hat{U}|\mathbf{Y}}(\overline{m}|\mathbf{y})  \left\{ \int\limits_{\mathbf{e}\in \mathcal{E}_1}dP_{\mathbf{Y}|\mathbf{X}}(\mathbf{y}|c(\overline{m},\mathbf{e}))\frac{dP_{\mathbf{E}}(\mathbf{e})}{1-\tau_n}\right\}. \label{key2}
\end{IEEEeqnarray}
Note that we divide by $1-\tau_n$ is to ensure that the term in braces is a probability distribution. From (\ref{maxp3}), (\ref{key2}) and the definition of $\beta$ error function, we get
%\begin{equation}
%\frac{1-\varepsilon-\tau_n}{1-\tau_n} = 1-\varepsilon-\frac{\tau_n\varepsilon}{1-\tau_n} \geq1-a\varepsilon,
%\label{interm}
%\end{equation} 
%where the last inequality follows from using Lemma \ref{maukeq}. 
\begin{IEEEeqnarray}{rCl}
\frac{1}{M} &\geq& \beta_{1-\varepsilon-\tau_n}\left(\int_{\mathbf{e}\in \mathcal{E}_1}dP_{\mathbf{Y}|\mathbf{X}}(.|c(\overline{m},\mathbf{e}))\frac{dP_{\mathbf{E}}(\mathbf{e})}{1-\tau_n},Q_{\mathbf{Y}} \right) \notag\\
&\geq& \inf \limits_{\mathbf{x} \in\mathbb{F}_{\overline{E}_n}} \beta_{1-\varepsilon-\tau_n}\left(\int_{\mathbf{e}\in \mathcal{E}_1}dP_{\mathbf{Y}|\mathbf{X}}(.|\mathbf{x})\frac{dP_{\mathbf{E}}(\mathbf{e})}{1-\tau_n},Q_{\mathbf{Y}} \right) \notag\\
&=& \inf \limits_{\mathbf{x} \in\mathbb{F}_{\overline{E}_n}} \beta_{1-\varepsilon-\tau_n}\left(P_{\mathbf{Y}|\mathbf{X}}(.|\mathbf{x}),Q_{\mathbf{Y}} \right).\notag
\label{conv2}
\end{IEEEeqnarray}
Note that we could take the infimum over $\mathbb{F}_{\overline{E}_n}$, a non-random set here because when $e^n \in \mathcal{E}_1$, it implies that $c(\overline{m},e^n) \in \mathbb{F}$. Hence we have (\ref{metaeh3}).

\bibliographystyle{IEEEtran}
\bibliography{journbib}

\end{document}